\newcommand{\argmax}{\operatornamewithlimits{argmax}}
\newcommand{\argmin}{\operatornamewithlimits{argmin}}
\newtheorem{theorem}{Theorem}[section]
\newtheorem{lemma}{Lemma}[section]  
\newtheorem{conjecture}{Conjecture}[section]
\newcommand{\reffig}[1]{Figure~\ref{#1}}
\newcommand{\reffigwo}[1]{\ref{#1}}
\newcommand{\refsec}[1]{Section~\ref{#1}}
\newcommand{\refthe}[1]{Theorem~\ref{#1}}
\newcommand{\reflem}[1]{Lemma~\ref{#1}}
\newcommand{\refeqwo}[1]{(\ref{#1})}
\newcommand{\bs}[1]{{\boldsymbol{#1}}}
\newcommand{\sym}{{{\psi}}}
\def\cE{{\mathcal E}}
\def\cH{{\mathcal H}}
\def\cI{{\mathcal I}}
\def\b0{{\bs 0}}
\newcounter{mycite}
\newtoks\citetoks
\DeclareRobustCommand\unscite[1]{%
  \@ifundefined{uns@cite#1}
    {\refstepcounter{mycite}\label{citelabel@#1}%
     \expandafter\xdef\csname uns@cite#1\endcsname{\arabic{mycite}}%
     \toks\z@=\expandafter{\the\citetoks}%
     \toks\tw@=\expandafter\expandafter\expandafter{%
       \csname uns@bibitem#1\endcsname}%
     \edef\@tempcite{\the\toks\z@\the\toks\tw@}%
     \global\citetoks=\expandafter{\@tempcite}%
    }{}[\@nameuse{uns@cite#1}]}
\newcommand{\mybibitem}[2]{%
  \@namedef{uns@bibitem#1}{\bibitem[\ref{citelabel@#1}]{#1}#2}}
\begin{document}
%
\title{Low-Complexity LP Decoding \\ of Nonbinary Linear Codes}
%
%
%

\author{Mayur~Punekar,~\IEEEmembership{Student Member,~IEEE,}
Pascal~O.~Vontobel,~\IEEEmembership{Senior Member,~IEEE,} \\
and Mark~F.~Flanagan,~\IEEEmembership{Senior Member,~IEEE}
\thanks{The work of M.~Punekar and M.~F.~Flanagan was supported by Science Foundation Ireland grants 07/SK/I1252b and 06/MI/006. The material in this paper was presented in part at the 48th Annual Allerton Conference on Communication, Control and Computing, Monticello, Illinois, Sept./Oct. 2010, and in part at the 2011 IEEE International Symposium on Information Theory, St. Petersburg, Russia, Jul./Aug. 2011.}
\thanks{M.~Punekar and M.~F.~Flanagan are with the Claude Shannon Institute, University College Dublin, Ireland (e-mail: \{mayur.punekar, mark.flanagan\}@ieee.org).}
\thanks{P.~O.~Vontobel is with Hewlett--Packard Laboratories, 1501 Page Mill Road, Palo Alto, CA 94304, USA (e-mail: pascal.vontobel@ieee.org).}
%
}

\markboth{To appear in IEEE Transactions on Communications}%
{Punekar \MakeLowercase{\textit{et al.}}: Low-complexity LP decoding of nonbinary linear codes}
%



\maketitle

\vspace{-60pt}
\begin{abstract}
Linear Programming (LP) decoding of Low-Density Parity-Check (LDPC) codes has attracted much attention in the research community in the past few years. LP decoding has been derived for binary and nonbinary linear codes. However, the most important problem with LP decoding for both binary and nonbinary linear codes is that the complexity of standard LP solvers such as the simplex algorithm remains prohibitively large for codes of moderate to large block length. To address this problem, two low-complexity LP (LCLP) decoding algorithms for binary linear codes have been proposed by Vontobel and Koetter,
henceforth called the {\em basic LCLP decoding algorithm} and the {\em subgradient LCLP decoding algorithm}. 
In this paper, we generalize these LCLP decoding algorithms to nonbinary linear codes.
The computational complexity per iteration of the proposed nonbinary LCLP decoding algorithms scales linearly with the block length of the code.
A modified BCJR algorithm for efficient check-node calculations in the nonbinary basic LCLP decoding algorithm is also proposed, which has complexity linear in the check node degree. 
Several simulation results are presented for nonbinary LDPC codes defined over $\mathbb{Z}_4$, GF($\mathbf{4}$), and GF($\mathbf{8}$) using quaternary phase-shift keying and 8-phase-shift keying, respectively, over the AWGN channel. It is shown that for some group-structured LDPC codes, the error-correcting performance of the 
nonbinary LCLP decoding algorithms is similar to or better than that of the min-sum decoding algorithm.
\end{abstract}

\begin{IEEEkeywords}
Linear programming decoding, nonbinary codes, LDPC codes, coordinate-ascent algorithm, subgradient algorithm.
\end{IEEEkeywords}
%
%
%
%
\IEEEpeerreviewmaketitle
%
\section{Introduction}
Low-Density Parity-Check (LDPC) codes have attracted much attention in the research community in the past decade. LDPC codes are generally decoded by message-passing iterative decoding methods such as the sum-product (SP) algorithm, also known as \emph{belief propagation} (BP), and the min-sum (MS) algorithm, which perform remarkably well at moderate SNR levels. 
However, binary LDPC codes often suffer from an \textit{error-floor} effect in the high-SNR region. Some progress has been made in the direction of finite-length analysis of LDPC codes and concepts such as stopping sets \unscite{DiPr_02}, trapping sets \unscite{Ri_03}, graph-cover pseudocodewords \unscite{VoKo_IT}, etc., were introduced and investigated to understand the behavior of the SP algorithm in the error-floor region. Nevertheless, finite-length analysis of LDPC codes under the SP algorithm is a difficult task.

The main focus of research in the area of LDPC codes has been on \textit{binary} LDPC codes. However, it is desirable to use nonbinary LDPC codes in many applications where bandwidth efficient higher order (i.e., nonbinary) modulation schemes are used. Nonbinary LDPC codes are also considered for storage applications \unscite{MaHa_09}. Nonbinary LDPC codes and the corresponding nonbinary SP algorithm were investigated by Davey and MacKay in \unscite{DaMa_98}, and since then many code construction methods and optimized nonbinary SP algorithms have been proposed. However, the finite-length analysis of nonbinary LDPC codes under the nonbinary SP algorithm is also difficult and attempts in this direction (see, e.g., \unscite{AnKa_10}) have been few.

An alternative decoding algorithm for binary LDPC codes, known as linear programming (LP) decoding\footnote{In this paper, the acronym LP stands for {\it linear programming} or {\it linear program}, depending on the context.}, 
was proposed by Feldman \textit{et al.} in \unscite{Fe_Th_03}, \unscite{FeWa_05}. In LP decoding, the ML decoding problem is modeled as an integer programming (IP) problem which is then relaxed to obtain the corresponding LP problem. This LP problem is solved with the help of standard LP solvers based on the simplex algorithm or interior-point methods. Compared to SP decoding, LP decoding relies on the well-studied mathematical theory of LP. Hence, LP decoding is better suited to mathematical analysis and it is possible to make statements about its complexity and convergence, as well as to place bounds on its error-correcting performance. However, the worst-case time complexity of the LP solvers based on the simplex method is known to be exponential in the description complexity, and with other LP solvers based on interior-point methods the corresponding worst-case time complexity is polynomial. On the other hand, iterative decoding algorithms such as the SP algorithm have (per iteration) time complexity linear in the block length of the code and hence significantly outperform LP decoding algorithms based on simplex or interior-point methods in terms of efficiency.

To overcome the complexity problem, several improved LP decoding algorithms have been proposed in 
\unscite{VoKo_06}, \unscite{YaFe_06}, \unscite{YaWa_08}, \unscite{TaSh_11}, \unscite{BaLu_08}, etc. 
In \unscite{VoKo_06} and \unscite{VoKo_07}, the authors use techniques from LP and coding theory to derive two low-complexity LP (LCLP) decoding algorithms, namely the {\it basic LCLP decoding algorithm} and the {\it subgradient LCLP decoding algorithm}, which can be used for approximate LP decoding of binary LDPC codes.
The basic and subgradient LCLP decoding algorithms rely on the {\it block-coordinate ascent method} (also known as the {\it nonlinear Gauss-Seidel method}) \unscite{Be_99} 
and the {\it incremental subgradient algorithm} \unscite{Ne_02}, respectively, to 
obtain a solution to the LP problem proposed in \unscite{FeWa_05}.
Also, the variable node (VN) and check node (CN) calculations of 
the basic LCLP decoding algorithm are directly 
related to VN and CN calculations of the binary SP algorithm; hence
the complexity of each iteration of 
the basic LCLP decoding algorithm 
is similar to that of the SP algorithm. The complexity of each iteration of the subgradient LCLP decoding algorithm is similar to that of the min-sum algorithm.
An algorithm similar to the basic LCLP decoding algorithm 
for more general graphical models was proposed in \unscite{GlJa_07}. An extension of the basic LCLP decoding algorithm 
was proposed and studied in \unscite{Bu_09}. %

In \unscite{FlSk_09}, 
LP decoding was extended from binary linear codes to nonbinary linear codes.
Nonbinary LP decoding, as presented in \unscite{FlSk_09}, relies on standard LP solvers based on simplex or interior-point methods, 
and hence standard iterative decoding algorithms such as the nonbinary SP algorithm significantly outperform these nonbinary LP decoding algorithms in terms of computational complexity. In independent work \unscite{GoBu_10}, \unscite{GoBu_12}, a new scheduling scheme was proposed for the nonbinary basic LCLP decoding algorithm which extends the low-complexity LP decoding method of \unscite{Bu_09} to nonbinary codes.

In this paper we extend the works of \unscite{VoKo_06}, \unscite{VoKo_07} to nonbinary linear codes 
and propose the nonbinary basic and subgradient LCLP decoding algorithms.
We use the LP formulation of nonbinary linear codes proposed in \unscite{FlSk_09} to develop an equivalent primal LP formulation. Then, using the techniques introduced in \unscite{Vo_02} and  \unscite{VoLo_03}, the corresponding dual LP is derived which in turn is used to develop update equations for nonbinary LCLP decoding algorithms. %
The complexity of the proposed nonbinary LCLP decoding algorithms per iteration is linear in the code's block length. In contrast to binary basic LCLP decoding, the VN and CN calculations of nonbinary basic LCLP decoding are not directly related to nonbinary SP. 
Therefore, without the use of an efficient CN processing algorithm, the complexity of the CN calculations will be exponential in the maximum CN degree.
To overcome this problem, we propose a modified BCJR algorithm for efficient CN processing which has complexity linear in the CN degree and allows for efficient implementation of nonbinary basic LCLP decoding. We also propose an alternative state metric which can be used for faster CN processing. %

The remainder of the paper is structured as follows. We begin with some notation and background in Section II. The primal LP is developed in Section III and the corresponding dual LP is given in Section IV. Section V presents the nonbinary basic LCLP decoding algorithm, and reduced complexity CN processing is presented in Section VI. Section VII outlines the nonbinary subgradient LCLP decoding algorithm. Simulation results are presented and discussed in Section VIII. 
%
%

\section{Notation and Background} 
The symbols $\mathbb{R}$, $\mathbb{R}_{>0}$, and $\mathbb{Z}_{>0}$ denote the
field of real numbers, the set of positive real numbers, and the set of
positive integer numbers, respectively. Let $\Re$ be a finite ring with $q$
elements, where $0$ and $1$ denote the additive and multiplicative identity,
respectively, and let $\Re^{-} = \Re \setminus \{0\}$. The standard inner
product of two vectors $\bs{x}$ and $\bs{y}$ of equal length is denoted by
$\left\langle \bs{x}, \bs{y} \right\rangle$.

Let $\mathcal{C}$ be a linear code of length $n$ over the ring $\Re$, defined by
$\mathcal{C} = \{ \bs{c} \in \Re^{n} : \bs{c} \mathcal{H}^{T} = \bs{0} \}$
where $\mathcal{H}$ is an $m \times n$ parity-check matrix with entries from $\Re$. The code $\mathcal{C}$ has rate\footnote{The code rate is defined as the ratio of the number of information symbols to the number of coded symbols. Note that in general, for a code over a ring $\Re$, the code rate may not in general be expressed in terms of the rank of $\mathcal{H}$ (since $\mathcal{H}$ may contain non-invertible elements).} $R(\mathcal{C}) = \log_q (|\mathcal{C}|) / n$ and is referred to as an $[n, \log_q(|\mathcal{C}|)]$ linear code over $\Re$. 

The set $\mathcal{J} = \{1,\ldots,m\}$ denotes row indices and the set $\mathcal{I} = \{1,\ldots,n\}$ denotes column indices of $\mathcal{H}$. We use $\mathcal{H}_j$ for the $j$-th row of $\mathcal{H}$ and $\mathcal{H}^i$ for the $i$-th column of $\mathcal{H}$. The support of the vector $\bs{c}$ is denoted by supp$(\bs{c})$. For each $j \in \mathcal{J}$, let $\mathcal{I}_j = \mbox{supp}(\mathcal{H}_j)$ and for each $i \in \mathcal{I}$, let $\mathcal{J}_i = \mbox{supp}(\mathcal{H}^i)$. Also let $d_j = |\mathcal{I}_j|$ and $d = \max_{j \in \mathcal{J}}\{d_j\}$. We define the set $\mathcal{E} = \{(i,j) \in \mathcal{I} \times \mathcal{J} \; : \; j \in \mathcal{J}, i \in \mathcal{I}_j\} = \{(i,j) \in \mathcal{I} \times \mathcal{J} \; : \; i \in \mathcal{I}, j \in \mathcal{J}_i\}$. Moreover, for each $j \in \mathcal{J}$, we define the local single parity check (SPC) code 
$\mathcal{B}_j = \{(b_i)_{i \in \mathcal{I}_j} \in \Re^{|\mathcal{I}_j|} : \sum_{i \in \mathcal{I}_j} b_i \cdot \mathcal{H}_{j,i} = 0 \}.$
For each $i \in \mathcal{I}$, we denote by $\mathcal{A}_i \subseteq \Re^{|\{0\} \cup \mathcal{J}_i|}$ the repetition code of the appropriate length and indexing. We also use the following notation introduced in 
\unscite{VoKo_06}: for a statement $A$ we have $\llbracket$A$\rrbracket = 0$ if $A$ is true and $\llbracket A \rrbracket = +\infty$ otherwise.
As in \unscite{FlSk_09}, we define the mapping 
\[
\bs{\xi} : \Re \rightarrow \{0,1\} ^{q-1} \subset \mathbb{R}^{q-1}
\]
by
\[
\bs{\xi}(r) = \bs{x} = (x^{(\rho)})_{\rho \in \Re^{-}}
\]
such that for each $\rho \in \Re^{-}$
\begin{eqnarray*}
x^{(\rho)} = \left\{ 
\begin{array}{l l}
  1 & \; \text{if} \; \rho = r \\
  0 & \; \text{otherwise} .\\

\end{array} \right. 
\end{eqnarray*}
Building on this we define
\[
\bs{\Xi} : \underset{\scriptstyle t \in \mathbb{Z}_{>0}}{\cup} \Re^{t} \rightarrow \underset{\scriptstyle t \in \mathbb{Z}_{>0}}{\cup} \{0,1\}^{(q-1)t} \subset \underset{\scriptstyle t \in \mathbb{Z}_{>0}}{\cup} \mathbb{R}^{(q-1)t} \; ,
\]
according to $\bs{\Xi} (\bs{c}) = (\bs{\xi}(c_1),\dots,\bs{\xi}(c_t))\nonumber, \quad \forall \bs{c} \in \Re^{t}, t \in \mathbb{Z}_{>0}.$

For vectors $\bs{f} \in \mathbb{R}^{(q-1)n}$ we use the notation
$\bs{f} = (\bs{f}_1 \;| \; \bs{f}_2 \; | \; \ldots \; | \; \bs{f}_n) \text{ where }
%
\forall i \in \mathcal{I}, \bs{f}_i = (f_{i}^{(r)})_{r \in \Re^{-}} \; .$


%
%
%
%
We also define the inverse of $\bs{\Xi}$ as
$\bs{\Xi}^{-1}(\bs{f}) = (\bs{\xi}^{-1}(\bs{f}_1), \bs{\xi}^{-1}(\bs{f}_2), \ldots, \bs{\xi}^{-1}(\bs{f}_n)).$
Note that the inverse of $\bs{\Xi}$ is well defined for any $\bs{f} \in \mathbb{R}^{(q-1)n}$ where each component $\bs{f}_i$, $i  \in \mathcal{I}$, has entries from $\{ 0,1 \}$ with sum at most $1$. 

We assume data transmission over a $q$-ary input memoryless channel 
whose input alphabet is identified with $\Re$, and whose output alphabet is denoted by $\Sigma$. The received vector is denoted by $\bs{y} =(y_1, y_2, \ldots, y_n)\in \Sigma^{n}$.
Based on this, for each $i \in \cI$ we define a vector $\bs{\lambda}_i = (\lambda^{(r)}_i)_{r \in \Re^{-}}$ 
where, for each $y \in \Sigma$, $r \in \Re^{-}$,
$
\lambda^{(r)}_i = \log \left(\frac{p(y_i|0)}{p(y_i|r)}\right).
$
Here $p(y|c)$ denotes the channel output probability (density) conditioned on the channel input. Based on this, we also define
$\bs{\Lambda} = (\bs{\lambda}_1\;|\;\bs{\lambda}_2\;| \dots |\;\bs{\lambda}_n).$

For $\kappa \in \mathbb{R}_{> 0}$, we define the function $\sym (x) = e^{\kappa x}$, and its inverse $\sym^{-1} (x) = \frac{1}{\kappa} \log (x).$ We will use Forney-style factor graphs (FFGs), also known as normal factor graphs \unscite{Fo_01} to represent the linear programs introduced in this paper. An FFG is a diagram that represents the factorization of a function of several variables. (Note that in this paper FFGs will not represent products of functions, but sums of functions.) For more information on FFGs the reader is referred to \unscite{Fo_01}, \unscite{Vo_02}, \unscite{LO_04}.

%
\begin{figure*}
\begin{minipage}[b]{0.45\linewidth}
\centering
\includegraphics[width=0.95\columnwidth, keepaspectratio]{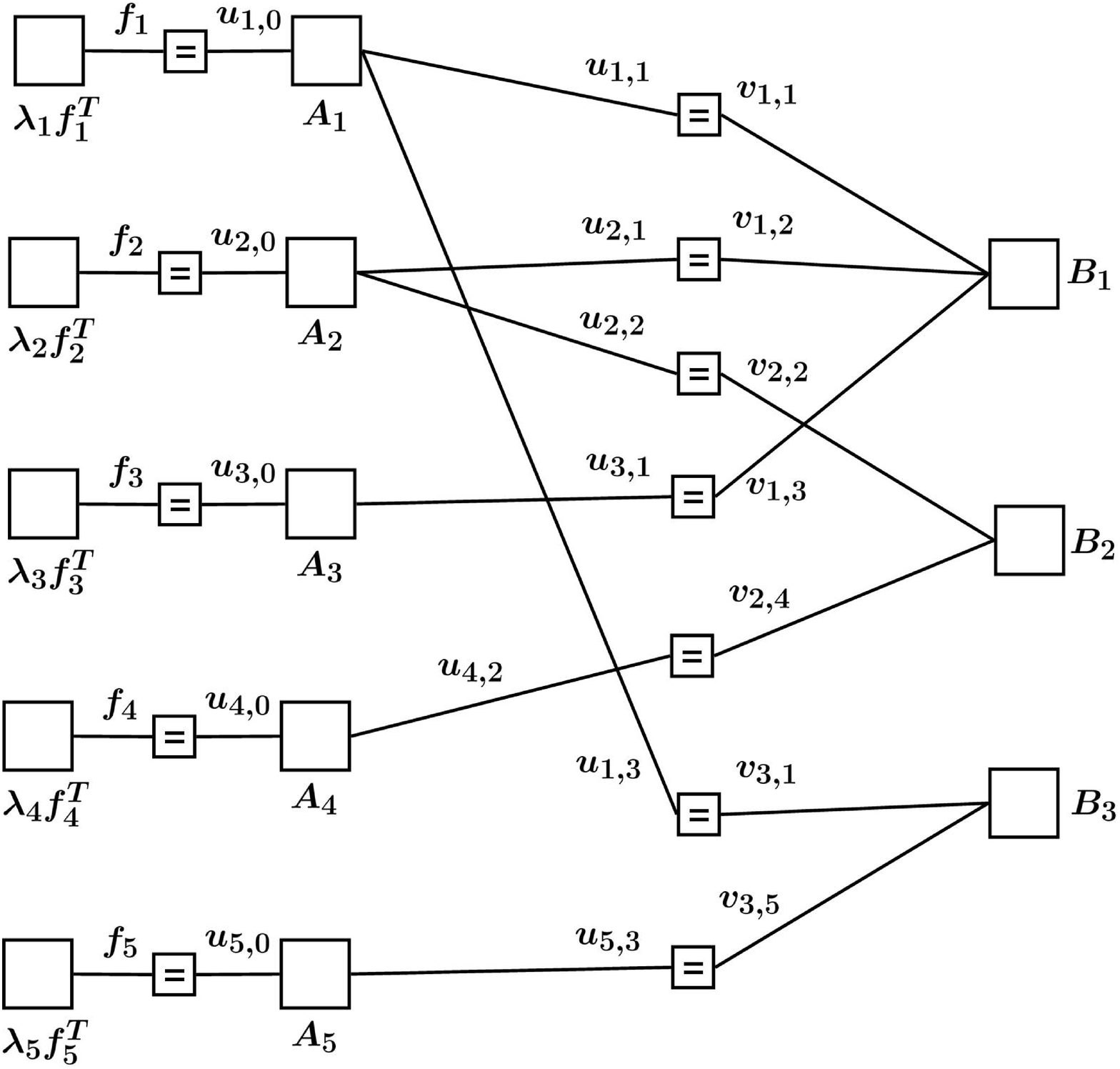}
\caption{FFG which represents the augmented cost function of \eqref{eq:primal} for the example $(5,2)$ nonbinary code.}
\label{fig:primal_LP_FFG}
\end{minipage}
\hspace{1cm}
\setcounter{figure}{2}
\begin{minipage}[b]{0.45\linewidth}
\centering 
\includegraphics[width=0.95\columnwidth, keepaspectratio]{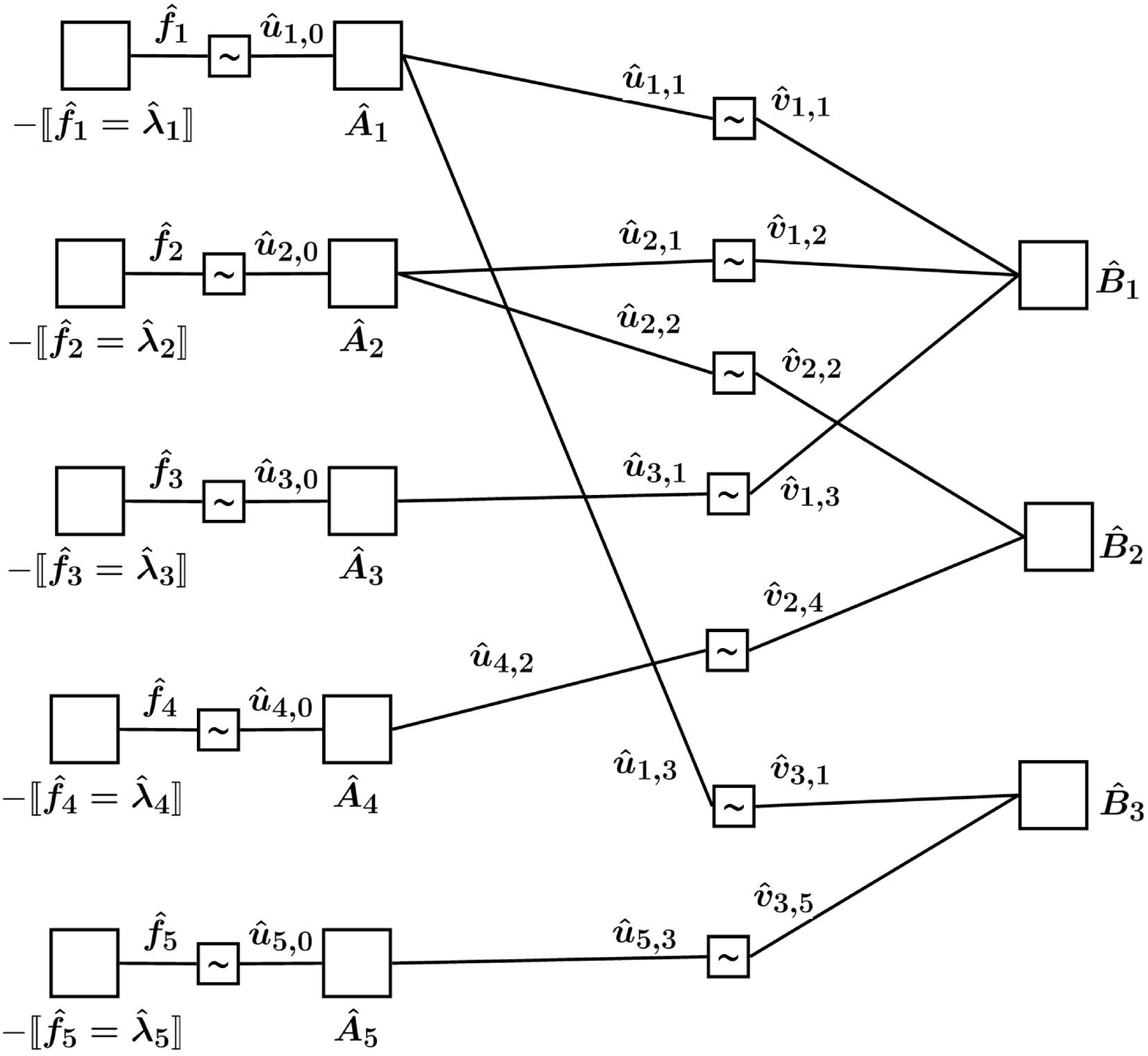}
\caption{FFG which represents the augmented cost function of \eqref{eq:dual} for the example $(5,2)$ nonbinary code. 
Here a function node which is marked with the symbol $\sim$, and which is connected to edges $u$ and $v$, denotes the function $-\llbracket u = -v \rrbracket$.
} 
\label{fig:dual_LP_FFG}
\end{minipage}
\end{figure*}
%
\begin{figure*}
\setcounter{figure}{1}
\begin{minipage}[b]{0.45\linewidth}
\centering
\includegraphics[width=1.0\columnwidth, keepaspectratio]{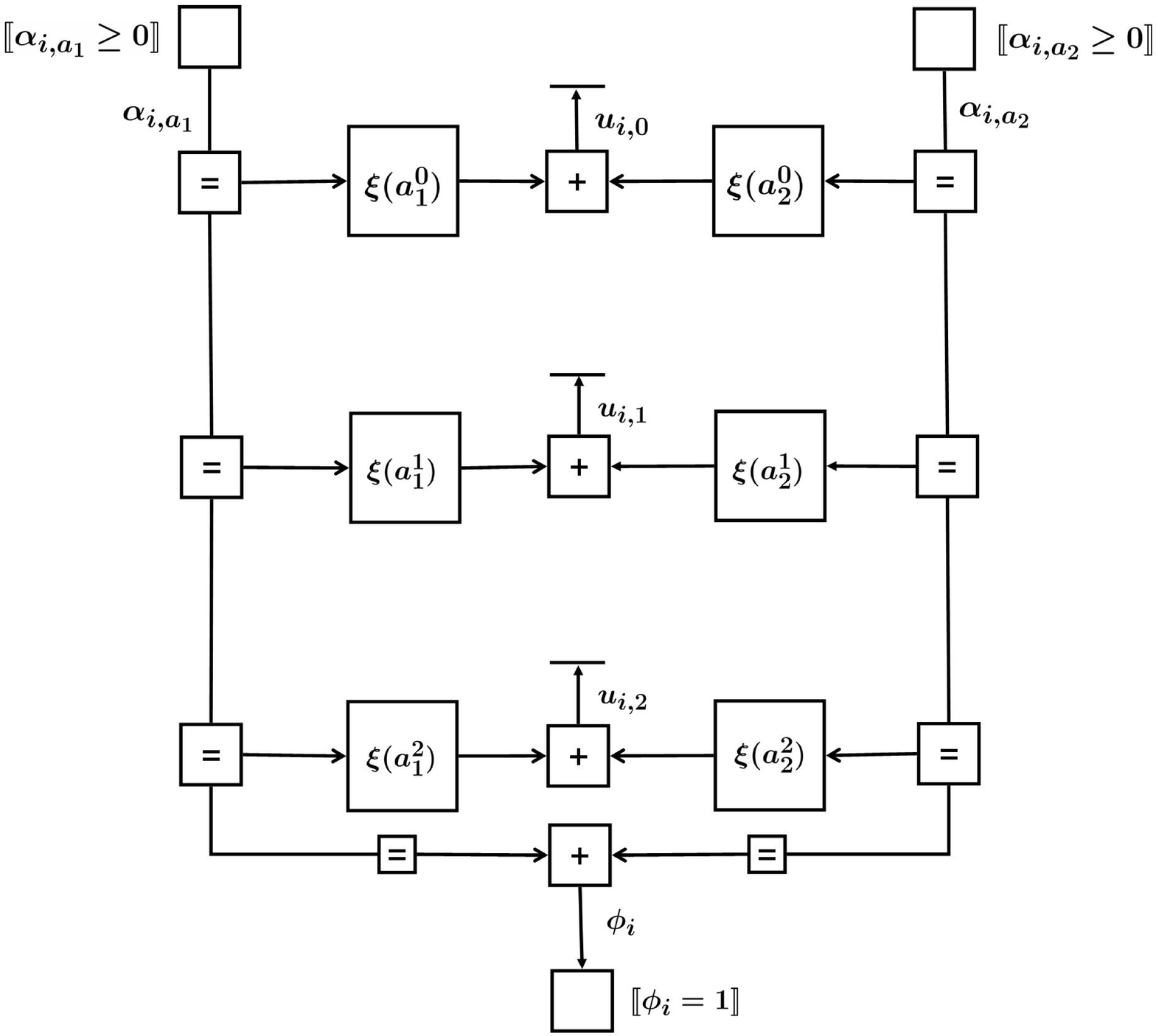}
\caption{FFG for the function $A_i(\bs{u}_i)$. This forms a subgraph of the overall FFG of \reffig{fig:primal_LP_FFG}. The FFG is illustrated for the special case $|\mathcal{A}_i| = 2$ and $|\mathcal{J}_i| = 2$.} 
\label{fig:primal_FFG}
\end{minipage}
\hspace{1cm}
\setcounter{figure}{3}
\begin{minipage}[b]{0.45\linewidth}
\centering 
\includegraphics[width=1.01\columnwidth, keepaspectratio]{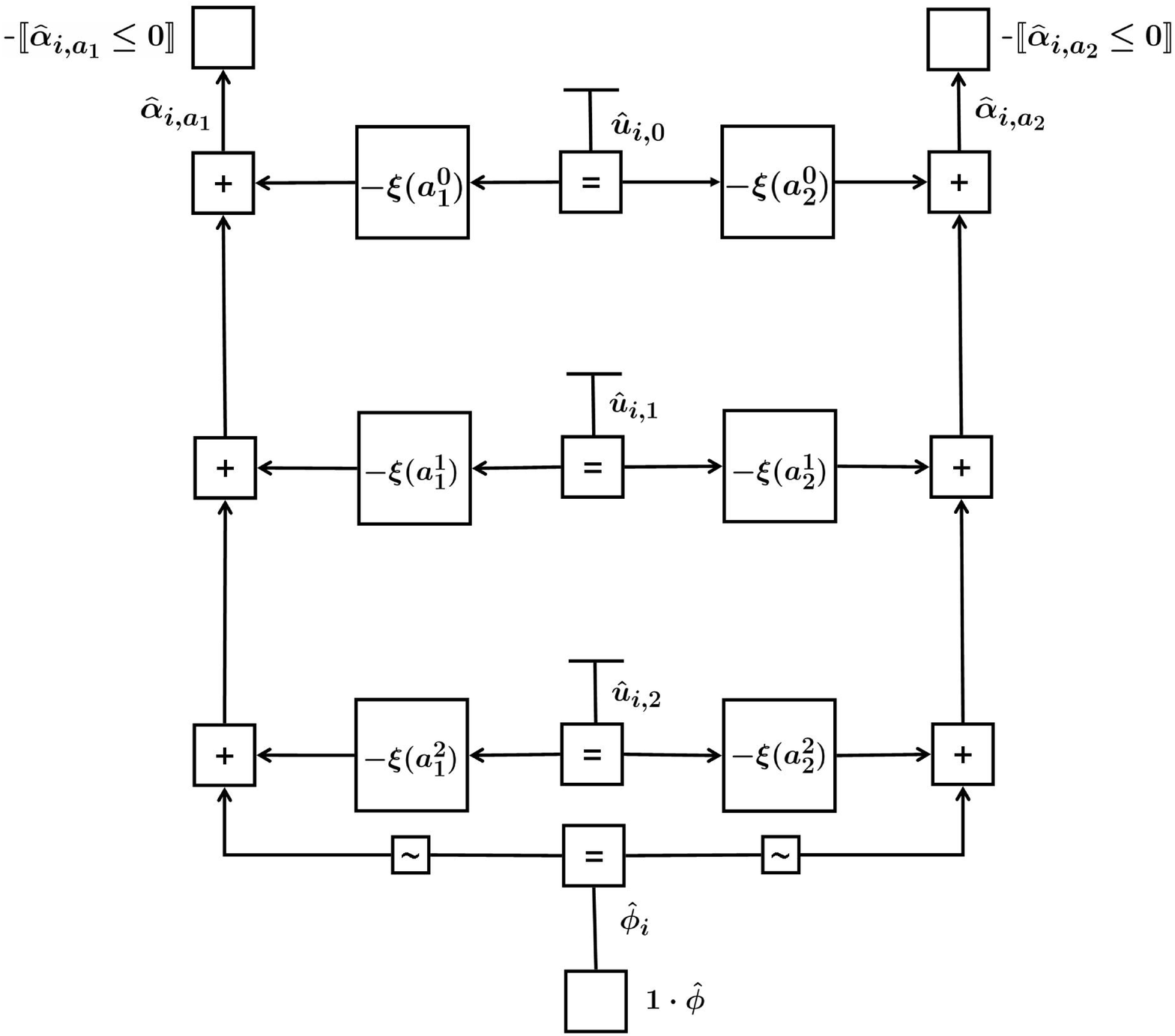}
\caption{FFG for the function $\hat{A}_{i}(\bs{\hat{u}}_{i})$. This FFG is dual to that of \reffig{fig:primal_FFG}. Here, for any primal variable $x$, the dual variable is denoted by $\hat{x}$.} 
\label{fig:dual_FFG}
\end{minipage}
\vspace{-10pt}
\end{figure*}

\section{The Primal Linear Program}
In \unscite{FlSk_09} the authors presented the following linear program to decode nonbinary linear codes: 
%

\vspace{5pt}
\textbf{NBLPD}: 
\begin{align*}
\text{min.} & \quad \sum_{i \in \cI} \bs{\lambda}_i \bs{f}_i^{T} & &\\    
\mbox{subj. to }& &\\
f_{i}^{(r)} &= \sum_{\underset{\scriptstyle b_i = r}{\bs{b} \in \mathcal{B}_j}} w_{j,\bs{b}} & \forall &j \in \mathcal{J}, \; \forall i \in \mathcal{I}_j, \; \forall r \in \Re^{-} \; ,\\
w_{j,\bs{b}} &\ge 0 &\forall &j \in \mathcal{J}, \; \forall \bs{b} \in \mathcal{B}_j \; ,\\
\sum_{\bs{b} \in \mathcal{B}_j} w_{j,\bs{b}} &= 1 &\forall &j \in \mathcal{J} \; .
\end{align*}
We denote the polytope represented by the variables and constraints of {\bf NBLPD} as $\mathcal{Q}_{\mathrm{f}}$. Two alternative polytope representations are also given in \unscite{FlSk_09}, which are both equivalent to {\bf NBLPD}. It is also possible to reformulate the constraints of {\bf NBLPD} with additional auxiliary variables. However, to develop a low-complexity LP decoding algorithm for {\bf NBLPD}, we use the approach of \unscite{VoKo_06} and reformulate {\bf NBLPD} so that the new LP formulation can be directly represented by an FFG:

\medskip
\textbf{PNBLPD}:
\begin{align*}
\text{min.}& \quad \sum_{i \in \cI} \bs{\lambda}_i \bs{f}_i^{T} & &\\
\mbox{subj. to} & & & \nonumber \\
\bs{{f}}_i &= \bs{u}_{i,0} &(&i \in \mathcal{I}),  \nonumber \\
\bs{u}_{i,j} &= \bs{v}_{j,i} &(&(i,j) \in \mathcal{E}), \nonumber\\
\sum_{\bs{a} \in \mathcal{A}_i} \alpha_{i,\bs{a}} \; \bs{\Xi}(\bs{a}) &= \bs{u}_i &(&i \in \mathcal{I}) \; , \nonumber \\
\sum_{\bs{b} \in \mathcal{B}_j} \beta_{j,\bs{b}} \; \bs{\Xi}(\bs{b}) &= \bs{v}_j &(&j \in \mathcal{J}) \; , \nonumber \\
\alpha_{i,\bs{a}} &\geq 0 &(&i \in \mathcal{I}, \bs{a} \in \mathcal{A}_i) \; , \nonumber \\
\beta_{j,\bs{b}} &\geq 0 &(&j \in \mathcal{J}, \bs{b} \in \mathcal{B}_j) \; , \nonumber\\
\sum_{\bs{a} \in \mathcal{A}_i} \alpha_{i,\bs{a}} &= 1 &(&i \in \mathcal{I}) \; , \nonumber \\
\sum_{\bs{b} \in \mathcal{B}_j} \beta_{j,\bs{b}} &= 1 &(&j \in \mathcal{J}) \; . \nonumber
\end{align*}
Here we introduce the definitions $\bs{u}_{i,j} = (u_{i,j}^{(r)})_{r \in \Re^{-}}$ and $\bs{v}_{j,i} = (v_{j,i}^{(r)})_{r \in \Re^{-}}$ for all $i \in \mathcal{I}$, $j \in \mathcal{J}_i \cup \{ 0 \}$. We also define $\bs{u}_i = (\bs{u}_{i,j})_{j \in \mathcal{J}_i \cup \{ 0 \}}$ for $i \in \mathcal{I}$, and $\bs{v}_j = (\bs{v}_{j,i})_{i \in \mathcal{I}_j}$ for $j \in \mathcal{J}$. 
We denote the polytope represented by the variables and constraints of {\bf PNBLPD} by $\mathcal{Q}_{\mathrm{p}}$. It is important to note that along with the convex hulls of the single parity-check codes, {\bf PNBLPD} also explicitly models the convex hulls of the repetition codes. 
The constraints of {\bf NBLPD} and {\bf PNBLPD} appear to be quite different due to the different notations. However, the projection of each polytope onto the variables denoted by $\bs{f}$ is the same in both cases, and therefore the LPs are equivalent from the point of view of decoding. 
\begin{theorem} \label{the:th1}
Polytopes $\mathcal{Q}_{\mathrm{f}}$ and $\mathcal{Q}_{\mathrm{p}}$ are equivalent from an LP decoding perspective, i.e., for every $(\bs{f}, \bs{\alpha}, \bs{\beta}) \in \mathcal{Q}_{\mathrm{p}}$ there exists a $\bs{w}$ such that $(\bs{f}, \bs{w}) \in \mathcal{Q}_{\mathrm{f}}$, and conversely, for every $(\bs{f}, \bs{w}) \in \mathcal{Q}_{\mathrm{f}}$ there exist $\bs{\alpha}, \bs{\beta}$ such that $(\bs{f}, \bs{\alpha}, \bs{\beta}) \in \mathcal{Q}_{\mathrm{p}}$.
\end{theorem}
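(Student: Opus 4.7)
The plan is to establish both directions of the equivalence by exhibiting explicit identifications between the variables of the two formulations and then verifying that all constraints are met.

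\emph{From $\mathcal{Q}_{\mathrm{p}}$ to $\mathcal{Q}_{\mathrm{f}}$.} Given $(\bs{f}, \bs{\alpha}, \bs{\beta}) \in \mathcal{Q}_{\mathrm{p}}$, I would simply set $w_{j,\bs{b}} = \beta_{j,\bs{b}}$. The nonnegativity and normalization of $\bs{w}$ then inherit directly from those of $\bs{\beta}$. The only NBLPD constraint requiring real work is
\[
f_i^{(r)} = \sum_{\bs{b} \in \mathcal{B}_j,\, b_i = r} w_{j,\bs{b}} \quad \text{for every } j \in \mathcal{J}_i .
\]
The key observation is that $\mathcal{A}_i$ is a repetition code, so every $\bs{a} \in \mathcal{A}_i$ has all coordinates equal and consequently every $\bs{\Xi}(\bs{a})$ is a concatenation of identical length-$(q-1)$ blocks. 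The convex-hull constraint $\sum_{\bs{a}} \alpha_{i,\bs{a}} \bs{\Xi}(\bs{a}) = \bs{u}_i$ therefore forces $\bs{u}_{i,0} = \bs{u}_{i,j}$ for every $j \in \mathcal{J}_i$. Chaining this with the PNBLPD equalities $\bs{f}_i = \bs{u}_{i,0}$ and $\bs{u}_{i,j} = \bs{v}_{j,i}$ gives $\bs{f}_i = \bs{v}_{j,i}$, and reading off the $r$-th coordinate from the SPC constraint $\sum_{\bs{b}} \beta_{j,\bs{b}} \bs{\Xi}(\bs{b}) = \bs{v}_j$ yields the desired equality.

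\emph{From $\mathcal{Q}_{\mathrm{f}}$ to $\mathcal{Q}_{\mathrm{p}}$.} Given $(\bs{f}, \bs{w}) \in \mathcal{Q}_{\mathrm{f}}$, I would again set $\beta_{j,\bs{b}} = w_{j,\bs{b}}$ and take $\bs{u}_{i,0} = \bs{f}_i$ together with $\bs{u}_{i,j} = \bs{v}_{j,i} = \bs{f}_i$ for each $(i,j) \in \mathcal{E}$. The consistency constraints and the SPC-marginalization constraint then reduce to instances of the NBLPD equalities. The remaining task is to define $\bs{\alpha}_i$. Since $\bs{\xi}(0) = \bs{0}$ and $\bs{\xi}(r)$ is the indicator vector with a single $1$ in position $r$ for $r \in \Re^{-}$, it suffices to place mass $f_i^{(r)}$ on the repetition codeword $(r,\ldots,r)$ for each $r \in \Re^{-}$ and mass $1 - \sum_{r \in \Re^{-}} f_i^{(r)}$ on the all-zero repetition codeword; by construction this realizes $\bs{u}_i$ as a convex combination of $\{\bs{\Xi}(\bs{a}) : \bs{a} \in \mathcal{A}_i\}$.

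\emph{Where care is needed.} The only step that is not purely notational bookkeeping is verifying nonnegativity of the weight assigned to the all-zero repetition codeword, which amounts to showing $\sum_{r \in \Re^{-}} f_i^{(r)} \leq 1$. I would derive this from the NBLPD constraints by picking any $j \in \mathcal{J}_i$ (such a $j$ exists whenever column $i$ of $\mathcal{H}$ is nonzero) and writing
\[
\sum_{r \in \Re^{-}} f_i^{(r)} = \sum_{r \in \Re^{-}} \sum_{\bs{b} \in \mathcal{B}_j,\, b_i = r} w_{j,\bs{b}} = \sum_{\bs{b} \in \mathcal{B}_j,\, b_i \neq 0} w_{j,\bs{b}} \leq \sum_{\bs{b} \in \mathcal{B}_j} w_{j,\bs{b}} = 1 .
\]
Beyond this, no real obstacle remains: once the identification $w \leftrightarrow \beta$ is made and the repetition-code constraint is recognized as enforcing equality of all VN-side copies of $\bs{f}_i$, the two polytopes project onto the same set in the $\bs{f}$-coordinates and therefore give identical LP decoding problems.
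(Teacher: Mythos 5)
Your proposal is correct and takes the natural route that the paper itself relies on (the proof is deferred to the cited conference paper): identify $w_{j,\bs{b}}$ with $\beta_{j,\bs{b}}$, note that the repetition-code constraint forces every block $\bs{u}_{i,j}$ to equal $\bs{u}_{i,0}=\bs{f}_i$, and in the converse direction build $\bs{\alpha}_i$ by placing mass $f_i^{(r)}$ on each codeword $(r,\ldots,r)$ and the remainder $1-\sum_{r\in\Re^{-}}f_i^{(r)}\ge 0$ on the all-zero codeword. Your explicit check that this remainder is nonnegative (and your caveat that $\mathcal{J}_i\neq\emptyset$, i.e.\ no all-zero columns of $\mathcal{H}$) covers the only non-trivial step, so nothing is missing.
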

\begin{proof}
The proof of \refthe{the:th1} can be found in \unscite{PuFl_10}.
\end{proof}
\vspace{5pt}

Before deriving the dual linear program, we reformulate {\bf PNBLPD} so that this LP can be represented by an FFG. For this purpose, the constraints of {\bf PNBLPD} are expressed as additive cost terms (also known as \textit{penalty terms}). The rule for assigning a cost to a configuration of variables is: if a given configuration satisfies the LP constraints then cost $0$ is assigned to this configuration, otherwise $+\infty$ is assigned. The {\bf PNBLPD} is then equivalent to the unconstrained minimization of the augmented cost function
\begin{align}
\sum_{i \in \mathcal{I}} \bs{\lambda}_i \bs{f}_{i}^{T} &+ \sum_{i \in \mathcal{I}} \llbracket \bs{f}_i = \bs{u}_{i,0} \rrbracket + \sum_{(i,j) \in \mathcal{E}} \llbracket \bs{u}_{i,j} = \bs{v}_{j,i} \rrbracket \nonumber \\ 
&+ \sum_{i \in \mathcal{I}} A_i(\bs{u}_i) + \sum_{j \in \mathcal{J}} B_j(\bs{v}_j) \; , \label{eq:primal}
\end{align}
where $\forall i \in \mathcal{I}$ and $\forall j \in \mathcal{J}$ we have defined
\begin{align*}
A_i(\bs{u}_i) &\triangleq \left \llbracket \sum_{\bs{a} \in \mathcal{A}_i} \alpha_{i, \bs{a}} \; \bs{\Xi}(\bs{a}) = \bs{u}_i \right \rrbracket + \sum_{\bs{a} \in \mathcal{A}_i} \llbracket \alpha_{i, \bs{a}} \ge 0 \rrbracket \\ &
+ \left \llbracket \sum_{\bs{a} \in \mathcal{A}_i} \alpha_{i, \bs{a}} = 1 \right \rrbracket , \\
B_j(\bs{v}_j) &\triangleq \left \llbracket \sum_{\bs{b} \in \mathcal{B}_j} \beta_{j, \bs{b}} \; \bs{\Xi}(\bs{b}) = \bs{v}_j \right \rrbracket + \sum_{\bs{b} \in \mathcal{B}_j} \llbracket \beta_{j, \bs{b}} \ge 0 \rrbracket \\ &
+ \left \llbracket \sum_{\bs{b} \in \mathcal{B}_j} \beta_{j, \bs{b}} = 1 \right \rrbracket.
\end{align*}
For ease of illustration we consider a $(5,2)$ code over $\mathbb{Z}_4$ with parity-check matrix 
\begin{eqnarray*}
\mathcal{H} = \left[ 
\begin{array}{l l l l l}
1 & 3 & 1 & 0 & 0 \\
0 & 1 & 0 & 1 & 0 \\
3 & 0 & 0 & 0 & 1 \\
\end{array}
\right] \; .
\end{eqnarray*}
The augmented cost function for this code is represented by the FFG of \reffig{fig:primal_LP_FFG}.

\section{The Dual Linear Program}
In this section we derive the dual LP for {\bf PNBLPD}. 
As shown in subsequent sections, the dual LP is useful for the development of the nonbinary LCLP decoding algorithms.

The dual LP of {\bf PNBLPD} can be derived from the augmented cost function of \eqref{eq:primal}. First we derive the duals of $A_i(\bs{u}_i)$ and $B_j(\bs{v}_j)$. The (primal) FFG of $A_i(\bs{u}_i)$ is shown in \reffig{fig:primal_FFG} and its dual is shown in \reffig{fig:dual_FFG}. For simplicity of exposition, these graphs are shown for the special case $\mathcal{A}_i = \{\bs{a}_1, \bs{a}_2\} = \{(a_1^0, a_1^1, a_1^2), (a_2^0, a_2^1, a_2^2)\}$, i.e., $|\mathcal{A}_i| = 2$ and $|\mathcal{J}_i| = 2$; the corresponding graphs for the general case have a similar structure. The dual FFG is derived with the help of techniques introduced in \unscite{Vo_02} and \unscite{VoLo_03}. 
The dual function $\hat{A}_{i}(\bs{\hat{u}}_{i})$ is obtained from the dual FFG of \reffig{fig:dual_FFG} as
\begin{equation}
\hat{A}_{i}(\bs{\hat{u}}_{i}) = \hat{\phi}_{i} - \sum_{\bs{a} \in \mathcal{A}_i} \llbracket \hat{\alpha}_{i,\bs{a}} \leq  0 \rrbracket \label{eq:dual_1}
\end{equation}
where, because for each $\bs{a} \in \mathcal{A}_i$ it holds that 
%
%
\begin{equation*}
\hat{\alpha}_{i,\bs{a}} = -\hat{\phi}_{i} + \langle -\bs{\hat{u}}_{i}, \bs{\Xi}(\bs{a}) \rangle \; , \nonumber
\end{equation*}
it follows that
\begin{equation}
-\left\llbracket \hat{\alpha}_{i,\bs{a}} \geq 0 \right\rrbracket = -\left\llbracket \hat{\phi}_{i} \le \langle -\bs{\hat{u}}_{i}, \bs{\Xi}(\bs{a}) \rangle \; \right\rrbracket . \label{eq:dual_2}
\end{equation}
From \refeqwo{eq:dual_1} and \refeqwo{eq:dual_2} we obtain
\begin{eqnarray*}
\hat{A}_{i}(\bs{\hat{u}}_{i}) & = & \hat{\phi}_{i} -  \sum_{\bs{a} \in \mathcal{A}_i} \llbracket \hat{\phi}_{i} \le \langle -\bs{\hat{u}}_{i}, \bs{\Xi}(\bs{a}) \rangle \rrbracket \nonumber \\ 
& = & \hat{\phi}_{i} - \left \llbracket  \hat{\phi}_{i} \le \min_{\bs{a} \in \mathcal{A}_i} \langle -\bs{\hat{u}}_{i}, \bs{\Xi}(\bs{a}) \rangle \right\rrbracket .
\end{eqnarray*}
%
%
%

The same procedure can be used to derive the dual of $B_j(\bs{v}_j)$ as
\begin{eqnarray*}
\hat{B}_{j}(\bs{\hat{v}}_{j}) = \hat{\theta}_{j} - \left \llbracket  \hat{\theta}_{j} \le \min_{\bs{b} \in \mathcal{B}_j} \langle -\bs{\hat{v}}_{j},  \bs{\Xi}(\bs{b}) \rangle \right\rrbracket \; .
\end{eqnarray*}
Finally, we use techniques from \unscite{Vo_02}, \unscite{VoLo_03} to derive the dual of the LP that is represented by the FFG in \reffig{fig:primal_LP_FFG}. The resulting LP is a maximization problem that is represented by the FFG in \reffig{fig:dual_LP_FFG}; its cost function equals
\begin{align}
\sum_{i \in \mathcal{I}} \hat{A}_{i}(\bs{\hat{u}}_{i}) &+ \sum_{j \in \mathcal{J}} \hat{B}_{j}(\bs{\hat{v}}_{j}) - \sum_{i \in \mathcal{I}} \llbracket \bs{\hat{f}}_{i} = -\bs{\hat{u}}_{i,0} \rrbracket \nonumber \\ & 
- \sum_{(i,j) \in \mathcal{E}} \llbracket \bs{\hat{u}}_{i,j} = -\bs{\hat{v}}_{j,i} \rrbracket - \sum_{i \in \mathcal{I}} \left\llbracket \bs{\hat{f}}_{i} = - \bs{\lambda}_{i}\right\rrbracket . \label{eq:dual} 
\end{align}
The dual of {\bf PNBLPD} can therefore be deduced as follows.

\medskip
\textbf{DNBLPD}:
\begin{align}
\text{max.}\quad &\sum_{i \in \mathcal{I}} \hat{\phi}_{i} + \sum_{j \in \mathcal{J}} \hat{\theta}_{j} & &\nonumber  \\
\mbox{subj. to } & & &\nonumber \\
&\hat{\phi}_{i} \leq \min_{\bs{a} \in \mathcal{A}_i}\left\langle-\bs{\hat{u}}_i, \bs{\Xi}(\bs{a})\right\rangle &(&i \in \mathcal{I}) \; , \nonumber \\
&\hat{\theta}_{j} \leq \min_{\bs{b} \in \mathcal{B}_j}\left\langle-\bs{\hat{v}}_j, \bs{\Xi}(\bs{b})\right\rangle &(&j \in \mathcal{J}) \; , \nonumber \\
&\bs{\hat{u}}_{i,j} = - \bs{\hat{v}}_{j,i} &(&(i,j) \in \mathcal{E}) \; , \nonumber \\
&\bs{\hat{u}}_{i,0} = - \bs{\hat{f}}_i &(&i \in \mathcal{I}) \; , \nonumber \\
&\bs{\hat{f}}_i = \bs{\lambda}_i &(&i \in \mathcal{I}) \; . \nonumber 
\end{align}
The augmented cost function of \eqref{eq:dual} for the $(5,2)$ binary code is represented by the FFG of \reffig{fig:dual_LP_FFG}. 

We make use of the soft-minimum operator introduced in \unscite{VoKo_06} and derive the {\it softened dual linear program}. For any $\kappa \in \mathbb{R}_{>0}$, the soft-minimum operator is defined as 
\begin{equation*}
\min_{l}{}^{\!\!(\kappa)} \{ z_l \} \triangleq -\frac{1}{\kappa} \log \left( \sum_{l} e^{-\kappa z_l}\right) = -\sym^{-1}\left(\sum_{l}\sym\Big({-z_l}\Big)\right).
\end{equation*}
Note that $\min_{l}{}^{\!\!(\kappa)} \{ z_l \} \le \min_{l} \{z_{l}\}$, with equality attained in the limit as $\kappa \to \infty$.  With this we define the softened dual linear program {\bf SDNBLPD} which is the same as {\bf DNBLPD} except that $\min$ is replaced by $\min {}^{\!\!(\kappa)}$.
%
%
%
%
\section{Nonbinary Basic Low-Complexity Linear Programming Decoding Algorithm} \label{sec:lclp}
As mentioned earlier, the basic LCLP decoding algorithm proposed in
\unscite{VoKo_06} is a block-coordinate ascent type algorithm. The
block-coordinate ascent algorithm iteratively finds the optimum of a given
continuously differentiable function. Each iteration of the block-coordinate
ascent algorithm consists of multiple steps and during each step, a block of
variables (that might also consist of a single variable) is updated so that
the given function is optimized with respect to them, while at the same time
the rest of the variables are kept constant. An iteration of the
block-coordinate ascent algorithm is completed when all variables are updated.

In this section, we derive the nonbinary basic LCLP decoding algorithm. For this, it is important to observe from {\bf SDNBLPD} that the variables $\bs{\hat{u}}_{i,j}$ and $\bs{\hat{v}}_{j,i}$ are {\it coupled} with each other, i.e., we always have $\bs{\hat{u}}_{i,j} = -\bs{\hat{v}}_{j,i}$ for all $(i, j) \in \cE$.

It can be observed that in {\bf SDNBLPD}, $\hat{\phi}_i$ and $\hat{\theta}_j$
are each involved in only one inequality and hence we can replace these
inequalities with equality without changing the optimal solution (the same is
true for {\bf DNBLPD}). With this, let us select an edge $(i,j) \in
\mathcal{E}$ and a ring element $r \in \Re^{-}$, and let us assume that all
variables except $\hat{u}^{(r)}_{i,j}$ are kept constant; then optimizing the
cost of {\bf SDNBLPD} with respect to $\hat{u}^{(r)}_{i,j}$ is equivalent to
optimizing $\hat{h} \left(\hat{u}^{(r)}_{i,j}\right)$, where
%
\begin{eqnarray} \label{eq:local_fun}
\hat{h} \left(\hat{u}^{(r)}_{i,j}\right) \triangleq \min_{\bs{a} \in \mathcal{A}_i} {}^{\!\!(\kappa)} \left\langle -\bs{\hat{u}}_i, \bs{\Xi}(\bs{a})\right\rangle + \min_{\bs{b} \in \mathcal{B}_j} {}^{\!\!(\kappa)} \left\langle-\bs{\hat{v}}_j, \bs{\Xi}(\bs{b})\right\rangle . 
\end{eqnarray}
Although the soft-minimum operator is an approximation of the minimum operator, its advantage lies in ensuring the convexity and differentiability of the function $\hat{h} \left(\hat{u}^{(r)}_{i,j}\right)$ in \eqref{eq:local_fun}, which makes possible the proofs of Lemmas \ref{lm:lclp} and \ref{le:convergence} described below.
%
%
%
%
%

If the current values of the variables $\hat{u}^{(r)}_{i,j}, \hat{\phi}_{i},
\hat{\theta}_{j}$ related to the edge $(i,j) \in \mathcal{E}$ and the ring
element $r \in \Re^{-}$ are replaced with the new values (at the same time
keeping the other variables constant) such that $\hat{h}
\left(\hat{u}^{(r)}_{i,j}\right)$ is maximized, then we can guarantee that the
dual function also increases or else remains constant at its current
value. The new value $\hat{u}^{*(r)}_{i,j}$, which maximizes $\hat{h}
\left(\hat{u}^{(r)}_{i,j}\right)$ is given by
\begin{align} \label{eq:new_u_alpha}
\hat{u}^{*(r)}_{i,j} \triangleq \argmax_{\hat{u}^{(r)}_{i,j}} \;\hat{h} \left(\hat{u}^{(r)}_{i,j}\right) .
\end{align}
Once we have calculated $\hat{u}^{*(r)}_{i,j}$, we can update the variables $\hat{\phi}_i$ and $\hat{\theta}_j$ accordingly. The calculation of $\hat{u}^{*(r)}_{i,j}$ is given in the following lemma.
\begin{lemma} \label{lm:lclp} The value of $\hat{u}^{*(r)}_{i,j}$ of
  \eqref{eq:new_u_alpha} can be calculated using
\begin{equation*}
\quad \hat{u}^{*(r)}_{i,j} = \frac{1}{2} \left( (V_{i,j,\bar{r}} - V_{i,j,r}) - (C_{j,i,\bar{r}} - C_{j,i,r}) \right),
\end{equation*}
where
\begin{align*}
&V_{i,j,\bar{r}} \triangleq - \min_{\underset{\scriptstyle a_j \ne r}{\bs{a} \in \mathcal{A}_i}} {}^{\!\!(\kappa)} \big\langle\! -\bs{\hat{u}}_i, \bs{\Xi}({\bs{a}}) \big\rangle, \\
&V_{i,j,r} \triangleq - \min_{\underset{\scriptstyle a_j = r}{\bs{a} \in \mathcal{A}_i}} {}^{\!\!(\kappa)} \big\langle\! -\tilde{\bs{u}}_i, \bs{\Xi}(\tilde{\bs{a}}) \big\rangle, \\&
C_{j,i,\bar{r}} \triangleq - \min_{\underset{\scriptstyle b_i \ne r}{\bs{b} \in \mathcal{B}_j}} {}^{\!\!(\kappa)} \big\langle\! -\bs{\hat{v}}_j, \bs{\Xi}(\bs{b}) \big\rangle, \\&
C_{j,i,r} \triangleq - \min_{\underset{\scriptstyle b_i = r}{\bs{b} \in
    \mathcal{B}_j}} {}^{\!\!(\kappa)} \big\langle\! -\tilde{\bs{v}}_j, \bs{\Xi}(\tilde{\bs{b}}) \big\rangle. 
\end{align*}
Here the vectors $\tilde{\bs{u}}_{i}$ and $\tilde{\bs{a}}$ are the vectors $\bs{\hat{u}}_i$ and ${\bs{a}}$, respectively, where the $j$-th position is excluded. Similarly, the vectors $\tilde{\bs{v}}_{j}$ and $\tilde{\bs{b}}$ are obtained by excluding the $i$-th position from $\bs{\hat{v}}_j$ and ${\bs{b}}$, respectively.\end{lemma}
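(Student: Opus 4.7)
The plan is to reduce $\hat{h}(\hat{u}^{(r)}_{i,j})$ to an explicit smooth function of the single scalar $x := \hat{u}^{(r)}_{i,j}$ and then set its derivative to zero. The coupling constraint $\bs{\hat{u}}_{i,j} = -\bs{\hat{v}}_{j,i}$ means that varying $x$ forces $\hat{v}^{(r)}_{j,i} = -x$; I substitute this first so that $\hat{h}$ depends only on $x$. The key observation is that $x$ enters each of the two soft-minima in a very restricted way: in the VN term only through codewords $\bs{a}\in\mathcal{A}_i$ with $a_j = r$, and in the CN term only through codewords $\bs{b}\in\mathcal{B}_j$ with $b_i = r$. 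This suggests partitioning each soft-min on this single-coordinate condition.

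For the VN side, since $\mathcal{A}_i$ is a repetition code there is exactly one $\bs{a}$ with $a_j=r$, namely the all-$r$ codeword, and its inner product $\langle -\bs{\hat{u}}_i,\bs{\Xi}(\bs{a})\rangle$ splits as $-x-V_{i,j,r}$. The remaining codewords ($a_j\ne r$) contribute nothing depending on $x$, and their soft-min equals $-V_{i,j,\bar{r}}$ by definition. Applying $\min^{(\kappa)}\{z_1,z_2\}=-\tfrac{1}{\kappa}\log(e^{-\kappa z_1}+e^{-\kappa z_2})$ gives
\begin{equation*}
\min_{\bs{a}\in\mathcal{A}_i}{}^{\!\!(\kappa)}\langle-\bs{\hat{u}}_i,\bs{\Xi}(\bs{a})\rangle \;=\; -\tfrac{1}{\kappa}\log\!\bigl(e^{\kappa(x+V_{i,j,r})}+e^{\kappa V_{i,j,\bar r}}\bigr).
\end{equation*}
An analogous partition on the CN side uses the fact that every $\bs{b}\in\mathcal{B}_j$ with $b_i=r$ contributes the same linear term $x$ from position $i$, while the $b_i\ne r$ branch is independent of $x$; this yields
\begin{equation*}
\min_{\bs{b}\in\mathcal{B}_j}{}^{\!\!(\kappa)}\langle-\bs{\hat{v}}_j,\bs{\Xi}(\bs{b})\rangle \;=\; -\tfrac{1}{\kappa}\log\!\bigl(e^{\kappa(-x+C_{j,i,r})}+e^{\kappa C_{j,i,\bar r}}\bigr).
\end{equation*}

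Adding these and pulling out constants, one obtains $\hat{h}(x) = \text{const} - \tfrac{1}{\kappa}\log(1+e^{\kappa(x+\alpha)}) - \tfrac{1}{\kappa}\log(1+e^{\kappa(-x+\gamma)})$ with $\alpha = V_{i,j,r}-V_{i,j,\bar r}$ and $\gamma = C_{j,i,r}-C_{j,i,\bar r}$. Differentiating, the derivative is $-\sigma(\kappa(x+\alpha))+\sigma(\kappa(-x+\gamma))$ where $\sigma$ is the logistic sigmoid; strict monotonicity of $\sigma$ forces the unique stationary condition $x+\alpha = -x+\gamma$, which rearranges to the claimed formula $\hat{u}^{*(r)}_{i,j} = \tfrac{1}{2}\bigl((V_{i,j,\bar r}-V_{i,j,r})-(C_{j,i,\bar r}-C_{j,i,r})\bigr)$. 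That this critical point is indeed the \emph{maximum} follows because each log-sum-exp term composed with a linear function of $x$ is strictly concave (its second derivative equals $-\kappa\sigma(1-\sigma)<0$), so $\hat{h}$ is strictly concave in $x$.

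The main delicate step is the bookkeeping in the two partitions: one has to track how the exclusion of the $j$-th position of $\bs{\hat{u}}_i$ (resp.\ the $i$-th position of $\bs{\hat{v}}_j$) in the definitions of $V_{i,j,r}$ and $C_{j,i,r}$ exactly accounts for all terms not equal to $x$, and how the sign of $x$ flips between the VN term (coefficient $+1$) and the CN term (coefficient $-1$) because of the coupling constraint. Once these signs and indices are pinned down, the rest is a direct differentiation of a one-dimensional strictly concave function; no further optimization machinery is needed.
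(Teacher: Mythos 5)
Your proof is correct: the partition of each softened minimum according to the value of the coupled coordinate (the unique all-$r$ word in the repetition code $\mathcal{A}_i$, and the $b_i=r$ subcode of $\mathcal{B}_j$), the sign flip from $\bs{\hat{u}}_{i,j}=-\bs{\hat{v}}_{j,i}$, and the stationarity condition of the resulting one-dimensional strictly concave log-sum-exp function are exactly the ingredients of the derivation the paper defers to its reference for this lemma (the nonbinary analogue of Lemma 3 of Vontobel--Koetter). This is essentially the same approach, so no further comment is needed.
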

\begin{proof}
The proof of \reflem{lm:lclp} can be found in \unscite{PuFl_10}.
\end{proof}
\vspace{5pt}
\reflem{lm:lclp} is a generalization of Lemma 3 of \unscite{VoKo_06} to the case of nonbinary codes. One visible difference between the binary case and the present generalization is in the calculation of $V_{i,j,\bar{r}}$ and $C_{j,i,\bar{r}}$. Here in the case of nonbinary codes, the calculation of $V_{i,j,\bar{r}}$ does not exclude the $j$-th entry from $\bs{a} \in \mathcal{A}_i$ and $\bs{\hat{u}}_{i,j}$; similarly, the calculation of $C_{j,i,\bar{r}}$ does not exclude the $i$-th entry from $\bs{b} \in \mathcal{B}_j$ and $\bs{\hat{v}}_{j,i}$. Note that this is not inconsistent since $\hat{u}^{({r})}_{i,j}$ is never used to update itself. 
Here the calculation of $V_{i,j,\bar{r}}$ and $C_{j,i,\bar{r}}$ requires $\bar{r} \in \Re \setminus \{0, r\}$ and hence $\bs{\xi}(\bar{r})$ is always multiplied with the corresponding $\hat{u}^{(\bar{r})}_{i,j}$. This ensures that $\hat{u}^{({r})}_{i,j}$ is not used for calculating $\hat{u}^{*(r)}_{i,j}$.

As mentioned in \unscite{VoKo_06}, the update equation given in Lemma 3 of \unscite{VoKo_06}
can be efficiently computed with the help of the variable and check node calculations of the (binary) SP algorithm. Due to this, the complexity of computing $(C_{j,i,\bar{r}} - C_{j,i,r})$ is $O(d)$ for binary codes.
On the other hand, in the case of nonbinary codes the mapping $\bs{\Xi}$ used in {\bf NBLPD} transforms the nonbinary linear codes $\mathcal{A}_i$ (repetition code) and $\mathcal{B}_j$ (SPC code) into nonlinear binary codes $\mathcal{A}^{\mathrm{NL}}_i = \{\bs{\Xi}(\bs{a}) : \bs{a} \in \mathcal{A}_i\}$ and $\mathcal{B}^{\mathrm{NL}}_j = \{\bs{\Xi}(\bs{b}) : \bs{b} \in \mathcal{B}_j\}$, respectively.
Here, the computation of $(V_{i,j,\bar{r}} - V_{i,j,r})$ and $(C_{j,i,\bar{r}} - C_{j,i,r})$ is related to the SP decoding of nonlinear binary codes 
$\mathcal{A}^{\mathrm{NL}}_i$ and $\mathcal{B}^{\mathrm{NL}}_j$. If $\mathcal{A}_i$ and $\mathcal{B}_j$ have equal lengths then they are duals of each other; however, the relationship between $\mathcal{A}^{\mathrm{NL}}_i$ and $\mathcal{B}^{\mathrm{NL}}_j$ 
is not so simple.

One option to compute $(C_{j,i,\bar{r}} - C_{j,i,r})$ 
is by going through all possible codewords of the SPC code $\mathcal{B}_j$ exhaustively.
In this case the complexity of computing $(C_{j,i,\bar{r}} - C_{j,i,r})$ is $O(d q^{(d-1)})$. Another possibility is to use the trellis of the nonbinary SPC code to calculate these values. In Section \ref{sec:bcjr} we prove that the computation of $C_{j,i,\bar{r}}$ and $C_{j,i,r}$ can be carried out with complexity linear in the check node degree by using a trellis-based variant of the SP algorithm. 

Before we come to that section, we formulate the complete decoding algorithm
which uses the update equation given in \reflem{lm:lclp}. We select an edge
$(i,j) \in \mathcal{E}$, a group element $r \in \Re^{-}$, and calculate
$\hat{u}^{*(r)}_{i,j}$ from \reflem{lm:lclp}. Then $\hat{\phi}_i, \;
\hat{\theta}_j$, and the objective function are updated accordingly. One
\textit{iteration} is completed when all variables associated with all edges
$(i,j) \in \mathcal{E}$ and ring elements $r \in \Re^{-}$ are updated
cyclically. This is a coordinate-ascent type algorithm and its convergence may
be proved in the same manner as in Lemma 4 of \unscite{VoKo_06}.
\begin{lemma} \label{le:convergence} Assume that $d_j \geq 3$, $\forall j \in
  \mathcal{J}$, for a given parity-check matrix $\mathcal{H}$ of the code
  $\mathcal{C}$. If we update the variables associated with all edges $(i,j)
  \in \mathcal{E}$ and ring elements $r \in \Re^{-}$ cyclically with the
  update equation given in \reflem{lm:lclp}, then the objective function of
  {\bf SDNBLPD} converges to its maximum.
\end{lemma}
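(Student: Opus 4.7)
The plan is to mirror the argument of Lemma~4 in \unscite{VoKo_06}, adapted to the nonbinary setting. The enabling observation is that replacing $\min$ by the soft-minimum $\min^{(\kappa)}$ throughout \textbf{SDNBLPD} renders the dual objective continuously differentiable and concave, which opens the door to classical block-coordinate-ascent convergence results such as Proposition~2.7.1 of \unscite{Be_99}.

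My first step is to rewrite \textbf{SDNBLPD} as an \emph{unconstrained} concave maximization in the free variables $\{\hat{u}^{(r)}_{i,j}\}$. The coupling equalities $\bs{\hat{v}}_{j,i}=-\bs{\hat{u}}_{i,j}$, $\bs{\hat{u}}_{i,0}=-\bs{\hat{f}}_{i}$, and $\bs{\hat{f}}_i=\bs{\lambda}_i$ eliminate $\bs{\hat{v}}$, $\bs{\hat{u}}_{i,0}$, and $\bs{\hat{f}}_i$, while each inequality defining $\hat{\phi}_i$ and $\hat{\theta}_j$ can be tightened to equality at any optimum (since $\hat{\phi}_i$ and $\hat{\theta}_j$ each occur in exactly one inequality and the objective is strictly increasing in them). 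The resulting reduced objective $F(\hat{\bs{u}})$ is a sum of log-sum-exp terms (one soft-min $\min^{(\kappa)}$ per variable and check node) and is therefore $C^{\infty}$ and concave.

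Next I would verify that the cyclic update of \reflem{lm:lclp} is literally a one-dimensional block-coordinate-ascent step on $F$: with every coordinate except $\hat{u}^{(r)}_{i,j}$ frozen, the restriction of $F$ to this coordinate equals, up to an additive constant, the function $\hat{h}(\hat{u}^{(r)}_{i,j})$ of \eqref{eq:local_fun}, and \reflem{lm:lclp} returns its unique maximizer. Strict concavity along this coordinate direction is where the hypothesis $d_j\ge 3$ enters: when $d_j\ge 3$, the soft-min over $\mathcal{B}_j$, viewed as a function of the single variable $\hat{u}^{(r)}_{i,j}$, retains genuine curvature (the log-sum-exp is strictly concave in any argument on which two distinct codewords of $\mathcal{B}_j$ differ, and $d_j\ge 3$ guarantees such pairs for every position $i$ and ring element $r$), whereas $d_j=2$ collapses the local SPC constraint to a repetition-type relation that can make the restricted function affine and destroy uniqueness of the coordinate maximum.

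Finally I would invoke a standard convergence theorem for block-coordinate ascent on continuously differentiable concave functions whose coordinate maxima are uniquely attained (for example, Proposition~2.7.1 of \unscite{Be_99}) to conclude that every limit point of the iterate sequence is a stationary point of $F$, and hence a global maximum by concavity. Boundedness of the iterate sequence — needed for limit points to exist — follows from the monotone increase of the bounded-above $F$ (boundedness from above is inherited from LP duality, since the primal \textbf{PNBLPD} is feasible) together with the coercivity of the log-sum-exp summands in each coordinate direction. The principal obstacle is the strict-concavity verification just described; once $d_j\ge 3$ is shown to force strict concavity in every coordinate direction, the remainder of the argument is mechanical and essentially identical to the binary treatment in \unscite{VoKo_06}.
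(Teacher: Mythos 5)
Your overall route is the same one the paper takes: the paper's proof is literally a pointer to Lemma~4 of \unscite{VoKo_06}, whose argument is exactly your outline --- eliminate the coupled variables, tighten the single inequalities on $\hat{\phi}_i$ and $\hat{\theta}_j$ to equalities, observe that the softened dual objective is concave and continuously differentiable, identify the update of \reflem{lm:lclp} as an exact one-coordinate maximization of $\hat{h}$ in \eqref{eq:local_fun}, and invoke the nonlinear Gauss--Seidel (block-coordinate-ascent) convergence result of \unscite{Be_99}.

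The one step I would push back on is your localization of where $d_j \geq 3$ enters. Uniqueness of the coordinate maximum does not hinge on curvature coming from the check-node soft-min: along the single coordinate $\hat{u}^{(r)}_{i,j}$ the repetition-code term $\min^{(\kappa)}_{\bs{a}\in\mathcal{A}_i}\langle-\bs{\hat{u}}_i,\bs{\Xi}(\bs{a})\rangle$ is already a soft-min of affine functions with two distinct slopes ($-1$ for the unique codeword with $a_j=r$, $0$ for all others), hence strictly concave in that coordinate for every $\kappa<\infty$; adding the check-node term preserves this, even when $d_j=2$, so the ``repetition-type collapse'' you fear does not make the restricted function affine. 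Moreover, your claimed mechanism --- that $d_j\ge 3$ guarantees, for every position $i$ and every $r\in\Re^{-}$, codewords of $\mathcal{B}_j$ with $b_i=r$ and $b_i\neq r$ --- is false over the rings the paper allows: e.g.\ over $\mathbb{Z}_4$ with row coefficients $(1,2,2)$ there is no $\bs{b}\in\mathcal{B}_j$ with $b_1=1$ (and in such degenerate cases the real issue is not uniqueness but attainment of the coordinate maximum, since the restricted function can be monotone towards a finite supremum). So either justify uniqueness via the variable-node term as above, or state the additional (implicit) nondegeneracy needed on the ring coefficients; the degree condition $d_j\ge 3$ is inherited from the binary argument of \unscite{VoKo_06} rather than delivered by the codeword-pair argument you give. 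The remaining ingredients (boundedness of the iterates so that limit points exist, then stationarity of limit points plus concavity) are stated rather than proved in your sketch, but that matches the level of detail of the cited binary proof.
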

\begin{proof}
The proof is essentially the same as that of Lemma 4 of \unscite{VoKo_06}.
\end{proof}
\medskip

The algorithm terminates after a fixed number of iterations or when it finds a codeword. Knowing the solution of {\bf SDNBLPD} does not give an estimate of the codeword directly. However, an estimate of the $i$-th symbol $\bs{c}^*_i$ can be obtained from the vector $\bs{\hat{u}}_{i}$. For this we define
\begin{equation*}
\hat{{x}}^{(r)}_i \triangleq \left\{ \begin{array}{cc}
\lambda_{i}^{(r)} - \sum_{j \in \mathcal{J}_i} \hat{u}_{i,j}^{(r)} & \textrm{ if } r \in \Re^{-} \\
0 & \textrm{ if } r = 0 \; . \end{array}\right. \;
\end{equation*}
%
%
Let $\mathcal{M}_i = \argmin_{r \in \Re} \{ \hat{{x}}^{(r)}_i \}$. If $\mathcal{M}_i$ contains a single element $r^*$, then the symbol estimate is obtained as $c^*_i = r^*$; otherwise, we mark $c^*_i$ as \emph{erased}. 

%
%

Due to the soft-minimum operator, the function $\hat{h}
\left(\hat{u}^{(r)}_{i,j}\right)$ in \refeqwo{eq:local_fun}
is differentiable everywhere and this fact is used in \reflem{lm:lclp} to obtain the update equations.
However, for practical implementations we are interested in $\kappa \to \infty$. As mentioned earlier, in the limit $\kappa \to \infty$, the soft-minimum operator becomes the minimum operator, which requires less computation. The following lemma considers $\kappa \to \infty$.
\begin{lemma} \label{le:lemma_k_infty}
In the limit $\kappa \to \infty$, the function $\hat{h}(\hat{u}^{(r)}_{i,j})$ is maximized by any value $\hat{{u}}_{i,j}^{(r)}$ that lies in the closed interval between
\begin{eqnarray*}
(V^{\infty}_{i,j,\bar{r}} -V^{\infty}_{i,j,r})  &\text{ and }& -(C^{\infty}_{j,i,\bar{r}} -C^{\infty}_{j,i,r})
\end{eqnarray*}
where
\begin{eqnarray*}
V^{\infty}_{i,j,\bar{r}} \triangleq - \min_{\underset{\scriptstyle a_j \ne r}{\bs{a} \in \mathcal{A}_i}} \left\langle -\bs{\hat{u}}_i, \bs{\Xi}({\bs{a}}) \right\rangle && C^{\infty}_{j,i,\bar{r}} \triangleq - \min_{\underset{\scriptstyle b_i \ne r}{\bs{b} \in \mathcal{B}_j}} \left\langle -\bs{\hat{v}}_j, \bs{\Xi}(\bs{b}) \right\rangle, \\
V^{\infty}_{i,j,r} \triangleq - \min_{\underset{\scriptstyle a_j = r}{\bs{a} \in \mathcal{A}_i}} \left\langle -\tilde{\bs{u}}_i, \bs{\Xi}(\tilde{\bs{a}}) \right\rangle && C^{\infty}_{j,i,r} \triangleq - \min_{\underset{\scriptstyle b_i = r}{\bs{b} \in \mathcal{B}_j}} \langle -\tilde{\bs{v}}_j, \bs{\Xi}(\tilde{\bs{b}}) \rangle .
\end{eqnarray*}
\end{lemma}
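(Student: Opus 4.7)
The plan is to pass to the pointwise limit $\kappa\to\infty$ in $\hat{h}(\hat{u}^{(r)}_{i,j})$ from \refeqwo{eq:local_fun}, so that each soft-minimum collapses to an ordinary minimum, and then to analyze the resulting piecewise linear function in the single scalar $x\triangleq\hat{u}^{(r)}_{i,j}$. Under the coupling $\bs{\hat{u}}_{i,j}=-\bs{\hat{v}}_{j,i}$ enforced by \textbf{SDNBLPD}, the variable $x$ appears in the variable-side inner product $\langle-\bs{\hat{u}}_i,\bs{\Xi}(\bs{a})\rangle$ only through the position indexed by $(j,r)$ of $\bs{\Xi}(\bs{a})$, and in the check-side inner product $\langle-\bs{\hat{v}}_j,\bs{\Xi}(\bs{b})\rangle$ only through the position indexed by $(i,r)$ of $\bs{\Xi}(\bs{b})$, via the substitution $\hat{v}^{(r)}_{j,i}=-x$.

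Next I would split each outer minimum according to whether the symbol at the relevant position equals $r$. On the variable side, $\mathcal{A}_i$ is a repetition code, so the $(j,r)$-component of $\bs{\Xi}(\bs{a})$ is $1$ precisely when $a_j=r$ (contributing $-x$ to the inner product) and $0$ otherwise; separating the $j$-th position gives $\langle-\bs{\hat{u}}_i,\bs{\Xi}(\bs{a})\rangle=-x+\langle-\tilde{\bs{u}}_i,\bs{\Xi}(\tilde{\bs{a}})\rangle$ when $a_j=r$. Taking the inner minimum and invoking the definitions of $V^{\infty}_{i,j,r}$ and $V^{\infty}_{i,j,\bar r}$ yields
\[
\min_{\bs{a}\in\mathcal{A}_i}\langle-\bs{\hat{u}}_i,\bs{\Xi}(\bs{a})\rangle \;=\; \min\bigl\{-x-V^{\infty}_{i,j,r},\;-V^{\infty}_{i,j,\bar r}\bigr\}.
\]
A symmetric argument on the check side, using $\hat{v}^{(r)}_{j,i}=-x$, gives
\[
\min_{\bs{b}\in\mathcal{B}_j}\langle-\bs{\hat{v}}_j,\bs{\Xi}(\bs{b})\rangle \;=\; \min\bigl\{x-C^{\infty}_{j,i,r},\;-C^{\infty}_{j,i,\bar r}\bigr\},
\]
where the sign flip in front of $x$ records the coupling. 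This is also consistent with the asymmetry highlighted right after \reflem{lm:lclp}: the $\bar r$-branches are evaluated with the full vectors while the $r$-branches are evaluated with their tilded (reduced) counterparts.

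Setting $\alpha\triangleq V^{\infty}_{i,j,\bar r}-V^{\infty}_{i,j,r}$ and $\beta\triangleq-(C^{\infty}_{j,i,\bar r}-C^{\infty}_{j,i,r})$, the first of the two piecewise linear functions above is constant for $x\leq\alpha$ and has slope $-1$ for $x\geq\alpha$, while the second has slope $+1$ for $x\leq\beta$ and is constant for $x\geq\beta$. Their sum $\hat{h}^{\infty}(x)$ is therefore concave and piecewise linear with slopes $+1$, $0$, $-1$ from left to right, the flat middle piece being exactly the closed interval with endpoints $\alpha$ and $\beta$ (in whichever order); the argmax set of $\hat{h}^{\infty}$ thus coincides with this interval, which is the claim. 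The only real care-point is sign bookkeeping: one must track the sign flip introduced by the coupling, align the transition points $\alpha$ and $\beta$ with the four quantities $V^{\infty}_{i,j,r}, V^{\infty}_{i,j,\bar r}, C^{\infty}_{j,i,r}, C^{\infty}_{j,i,\bar r}$, and then invoke the elementary fact that a sum of two concave piecewise-linear functions attains its maximum on the overlap of their flat regions.
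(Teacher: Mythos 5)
Your argument is correct and is essentially the proof the paper has in mind: the paper itself only says the lemma is a generalization of Lemma~5 of Vontobel--Koetter, and your explicit derivation --- replacing each soft-minimum by a minimum, splitting over $a_j = r$ versus $a_j \ne r$ (resp.\ $b_i = r$ versus $b_i \ne r$) under the coupling $\hat{v}^{(r)}_{j,i}=-\hat{u}^{(r)}_{i,j}$ so that $\hat{h}^{\infty}(x)=\min\{-x-V^{\infty}_{i,j,r},\,-V^{\infty}_{i,j,\bar r}\}+\min\{x-C^{\infty}_{j,i,r},\,-C^{\infty}_{j,i,\bar r}\}$, then reading off the concave piecewise-linear shape with slopes $+1,0,-1$ --- is exactly that generalization written out. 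One cosmetic remark: the closing appeal to ``the maximum lies on the overlap of the flat regions'' is imprecise when $V^{\infty}_{i,j,\bar r}-V^{\infty}_{i,j,r} < -(C^{\infty}_{j,i,\bar r}-C^{\infty}_{j,i,r})$, since the two individual flat regions are then disjoint and the middle piece is flat only because the slopes $-1$ and $+1$ cancel; however, your preceding slope analysis already covers both orderings of the endpoints, so nothing essential is missing.
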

\begin{proof}
The proof of the lemma is a generalization of Lemma 5 of \unscite{VoKo_06}.
\end{proof}
\medskip
\begin{conjecture}
  It is possible to update the variables associated with the edges $(i,j) \in
  \mathcal{E}$ and the ring elements $r \in \Re^{-}$ cyclically, where
  $\hat{{u}}^{*(r)}_{i,j}$ is calculated according to
  \reflem{le:lemma_k_infty}. The authors believe that with a suitable update
  schedule such an algorithm cannot get stuck in a suboptimal point, and that
  the objective function should converge towards the optimal solution of {\bf
    DNBLPD}.
However, in this case it is difficult to prove the convergence of the algorithm. This is because for $\kappa \to \infty$ the objective function is not everywhere differentiable and it is not possible to use the same argument as in \reflem{le:convergence}. This problem is also discussed for the binary case in Conjecture~6 and Section~E of \unscite{VoKo_06}. \endproof
\end{conjecture}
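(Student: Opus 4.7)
The plan is to combine a double-limit argument with the finite-$\kappa$ convergence result of \reflem{le:convergence}. First I would run the smoothed algorithm of \refsec{sec:lclp} to convergence for each member of a sequence $\kappa_n \to \infty$, obtaining a sequence of optimal dual iterates $(\bs{\hat{u}}^{(\kappa_n)}, \hat{\phi}^{(\kappa_n)}, \hat{\theta}^{(\kappa_n)})$. Since \textbf{DNBLPD} is a bounded LP, the feasible set of \textbf{SDNBLPD} can be intersected with a large compact box without changing the optimum, so by Bolzano--Weierstrass a subsequence converges to some limit point. The pointwise convergence $\min^{(\kappa)} \to \min$ (uniform on compacts) combined with continuity of the remaining constraint functions would then show that this limit is feasible and optimal for \textbf{DNBLPD}. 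This already delivers a weak form of the conjecture: \emph{some} cyclic procedure in the $\kappa = \infty$ regime, namely the one obtained as a limit of smoothed trajectories, reaches the dual optimum.

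Next I would try to identify conditions under which the direct $\kappa = \infty$ algorithm --- with $\hat{u}^{*(r)}_{i,j}$ chosen, say, as the midpoint of the interval in \reflem{le:lemma_k_infty} --- reproduces the limiting trajectory of the smoothed iterates. The key observation is that the soft-min differs from the hard-min only in a neighborhood of tie points that shrinks as $\kappa \to \infty$, so for large $\kappa$ the smoothed update almost coincides with a point in the hard-min's arg-max interval. I would attempt an induction on iterations, restricted to generic channel outputs $\bs{\lambda}$ for which intermediate minima in \reflem{lm:lclp} are attained by unique codewords outside a measure-zero exceptional set, and show that the midpoint rule tracks the smoothed trajectory up to $O(1/\kappa)$ error.

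The hard part will be showing that the $\kappa = \infty$ coordinate-ascent cannot get \emph{stuck} at a non-optimal point. Classical counterexamples for block-coordinate descent on non-smooth convex problems show that such methods can halt at configurations where no single-coordinate move improves the cost even though a joint move does. The proposed midpoint rule is designed to defeat such configurations by always advancing to the relative interior of the set of per-coordinate optimizers; to rigorously exclude stuck points, I would aim to show that at any fixed point of this midpoint update, $\bs{0}$ lies in the subdifferential of the dual objective --- i.e.\ the KKT conditions for \textbf{DNBLPD} hold --- by aggregating the one-dimensional optimality conditions given by \reflem{le:lemma_k_infty} across edges and ring elements, and exploiting the separable structure of the $\hat{A}_i$ and $\hat{B}_j$ terms in \refeqwo{eq:dual}.

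The principal obstacle, as the authors note and as mirrors Conjecture~6 of \unscite{VoKo_06}, is precisely this last step: passing from ``no coordinate improves the cost'' to ``no joint move improves the cost''. In the smooth case \reflem{le:convergence} handles this because the gradient vanishing along each coordinate implies the full gradient vanishes; for subdifferentials, however, coordinate-wise zero-inclusion does not in general imply joint zero-inclusion, and the symmetric convex-hull structure of $\mathcal{Q}_{\mathrm{p}}$ does not obviously rule out pathological tie configurations on non-optimal facets. A plausible route is to perturb the LLRs $\bs{\lambda}$ by a small generic vector so that every optimal \textbf{DNBLPD} solution lies at a differentiable vertex, but turning such a perturbation into a rigorous convergence proof for the unperturbed algorithm --- without appealing back to the finite-$\kappa$ argument --- appears to require genuinely new ideas beyond those used in the smooth-case lemmas of this paper.
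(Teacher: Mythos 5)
You are attempting to prove something the paper itself only states as a conjecture: the authors explicitly say that for $\kappa \to \infty$ the dual objective of \textbf{DNBLPD} is not everywhere differentiable, that the coordinate-ascent argument of \reflem{le:convergence} therefore cannot be reused, and that the same question was already left open for binary codes in Conjecture~6 of \unscite{VoKo_06}. So there is no paper proof to compare against, and your proposal does not close the gap either---as you yourself concede in your final paragraph, the decisive step is missing.

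Two parts of your plan fall short of what the conjecture actually asserts. First, the double-limit argument (run the smoothed algorithm for each $\kappa_n$, extract a convergent subsequence, pass to the limit) only shows that optima of \textbf{SDNBLPD} approach an optimum of \textbf{DNBLPD} as $\kappa_n \to \infty$; it says nothing about the trajectory produced by the genuine $\kappa = \infty$ update of \reflem{le:lemma_k_infty} under a fixed cyclic schedule, which is the algorithm the conjecture is about. A family of smoothed algorithms indexed by $\kappa_n$ is not ``some cyclic procedure in the $\kappa=\infty$ regime,'' and your proposed induction tracking the smoothed trajectory up to $O(1/\kappa)$ error would require uniform control of the accumulated error over an unbounded number of sweeps, which is not provided. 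Second, and more fundamentally, your fixed-point step---showing that at any stationary point of the midpoint rule the zero vector lies in the subdifferential of the nonsmooth concave dual objective---is exactly the assertion that coordinatewise optimality implies joint optimality, which fails in general for nonsmooth problems and is precisely the obstruction that forced both this paper and \unscite{VoKo_06} to leave the statement as a conjecture. Aggregating the one-dimensional optimality intervals of \reflem{le:lemma_k_infty} over edges and ring elements does not yield the joint subgradient condition, and the generic perturbation of $\bs{\Lambda}$ is, as you note, not converted into a statement about the unperturbed algorithm. In short, your proposal is a sensible research program whose honest conclusion matches the paper's: the convergence of the $\kappa \to \infty$ algorithm remains unproven.
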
 
\vspace{8pt}

After the algorithm terminates, the decision rule described above can be used
to obtain each symbol estimate $c^*_i, i \in \mathcal{I}$.
The nonbinary basic LCLP decoding algorithm of \reflem{lm:lclp} updates a
single variable associated with an edge $(i, j) \in \mathcal{E}$ and a ring
element $r \in \Re^{-}$ at a time. However, we observed from our simulation
work that updating all variables related to an edge $(i, j) \in \mathcal{E}$
simultaneously and processing each edge $(i, j) \in \mathcal{E}$ one at a
time, does not effect the convergence or the error-correcting performance of
the nonbinary basic LCLP decoding algorithm. It is also possible to solve
\textbf{NBLPD} by varying all the edge variables related to a VN $i \in
\mathcal{I}$ or a CN $j \in \mathcal{J}$ simultaneously. Such a variant was
proposed for the basic LCLP algorithm in \unscite{VoSh_09}. We extended the
work of \unscite{VoSh_09} to nonbinary codes for the case in which all the
edge variables related to a VN $i \in \mathcal{I}$ are updated
simultaneously. Details about this case can be found in \unscite{Pu_12}.
For the other case in which all edge variables related to a CN $j \in \mathcal{J}$ are updated simultaneously, we remark that the approach of \unscite{VoSh_09} cannot be used with the nonbinary basic LCLP decoding algorithm. Again, the interested reader is referred to \unscite{Pu_12} for details. 
%
%
%
%
\section{Modified BCJR Algorithm for \\ Check Node Calculation} \label{sec:bcjr}
In this section we propose a modified BCJR algorithm which allows for efficient implementation of the nonbinary basic LCLP decoding algorithm. We observe that the equations for $C_{j,i,\bar{r}}$ and $C_{j,i,r}$ defined in Lemma~\ref{lm:lclp} can be rewritten as follows:
\begin{align}
& \sym \Big(C_{j,i,r}\Big) = \sum_{\underset{\scriptstyle b_i = r}{\bs{b} \in \mathcal{B}_j}} \sym \Big(\langle \tilde{\bs{v}}_j, \bs{\Xi}(\tilde{\bs{b}}) \rangle\Big) \; , \label{eq:c_alpha} \\
&\sym \Big(C_{j,i,\bar{r}}\Big) = \sum_{\underset{\scriptstyle b_i \ne r}{\bs{b} \in \mathcal{B}_j}} \sym \Big(\left\langle \bs{\hat{v}}_j, \bs{\Xi}(\bs{b}) \right\rangle \Big) \; . \label{eq:c_alpha_bar}
\end{align}

It may be observed from the above equations that the calculation of $C_{j,i,r} \; \text{and} \; C_{j,i,\bar{r}}$ is in the form of the marginalization of a product of functions. Hence it is possible to compute $C_{j,i,r} \; \text{and} \; C_{j,i,\bar{r}}$ with the help of a trellis-based variant of the SP algorithm (i.e., a BCJR-type algorithm). One possibility is to use the trellis of the binary nonlinear code $\mathcal{C}^{\mathrm{NL}}_j = \{\bs{\Xi}(\bs{b}) : \bs{b} \in \mathcal{B}_j\}$. However, due to the nonlinear nature of this binary code, the state complexity at the center of its trellis would be exponential in $d_j$. Here state merging is also not possible. Hence there is no complexity advantage when we use the trellis of the binary nonlinear code $\mathcal{C}^{\mathrm{NL}}_j$.

However, if the trellis for the nonbinary SPC code $\mathcal{B}_j$ is used, then the state complexity at each trellis step is $\mathcal{O}(q)$ and is independent of $d_j$. The branch complexity of this trellis is $\mathcal{O}(q^2)$. In the following, we prove that the marginals $C_{j,i,\bar{r}}$ and $C_{j,i,r}$ can be efficiently calculated with some modifications to the BCJR algorithm which uses the trellis of the nonbinary code $\mathcal{B}_j$. 

For ease of exposition, we will assume here that $\cI_j = \{ 1,\ldots, d_j \}$, and let $h_i = \mathcal{H}_{j,i}$ for $i \in \mathcal{I}_j$. We then define the following for the trellis of the SPC code $\mathcal{B}_j$:
\begin{enumerate}
\item The set of all states at time $i'$ is given by $\mathcal{S}_{i'} =
  \Re$, $i' \in \{ 0, \ldots, d_j \}$.
\item There is a branch joining $s \in \mathcal{S}_{i'-1}$ and $s' \in
  \mathcal{S}_{i'}$ for every symbol $b_{i'}$ satisfying $s'-s = h_{i'}
  b_{i'}$ (if no such symbols $b_{i'}$ exist, there is no such trellis
  branch). For such a symbol $b_{i'}$, the ``branch metric'' is given by
  $g(b_{i'}) = \sym \left( \langle\bs{\hat{v}}_{j,i'}\;,\;
    \bs{\xi}(b_{i'})\rangle\right)$.
\item We define
  $\sigma(i'_1, i'_2) = \sum_{i' = i'_1}^{i' = i'_2} \;\; h_{i'}
  b_{i'}$ for $\bs{b} \in \mathcal{B}_j$. In the trellis for the SPC code,
  each state $s \in \mathcal{S}_{i'}$ represents the ``partial syndrome''
  $\sigma(1, i')$.
\item The state metric for forward recursion is
\begin{align}
&\mu_{i'}(s) = \sum_{\underset{\scriptstyle \sigma(1, i') =
    s}{(b_1,\ldots,b_{i'})}} \; \prod_{i'' = 1}^{i'} g(b_{i''}), \quad s \in
\mathcal{S}_{i'}, \ i' \in \mathcal{I}_j \label{eq:mu} 
\end{align}
with $\mu_{0}({0}) = 1$, $\mu_{0}({r}) = 0$, $\forall r \in \Re^{-}$.
Similarly, the state metric for backward recursion is  
\begin{align}
&\nu_{i'}({s}) = \sum_{\underset{\scriptstyle \sigma(i'+1,d_j) =
    -s}{(b_{i'+1},\ldots,b_{d_j})}} \; \prod_{i''=i'+1}^{d_j} g(b_{i''}), \quad s
\in \mathcal{S}_{i'}, \ i' \in \mathcal{I}_j \label{eq:nu}
\end{align}
with $\nu_{d_j}({0}) = 1$, $\nu_{d_j}({r}) = 0$, $\forall r \in \Re^{-}$.
%
%
\end{enumerate}
%

%
%
%
%
\begin{lemma} \label{lm:bcjr_compute}
$C_{j,i,r}$ and $C_{j,i,\bar{r}}$ can be efficiently computed on the trellis of the nonbinary code $\mathcal{B}_j$ as follows,
\begin{align}
& \sym \Big(C_{j,i,\bar{r}}\Big) = \sum_{s \in \mathcal{S}_{i}} \sum_{b_{i}
  \in R \backslash\{ r \}} \mu_{i-1}({s - h_i b_i}) \cdot \nu_{i}({s}) \cdot g(b_i) \label{eq:alpha_bar} \; ,\\
& \sym \Big(C_{j,i,r}\Big) = \sum_{s \in \mathcal{S}_{i}} \mu_{i-1}({s - h_i r}) \cdot \nu_{i}({s}) \; ,\label{eq:alpha}
\end{align}
where state metrics $\mu_{i'}$ and $\nu_{i'}$ are calculated recursively from previous state metrics via
\begin{align*}
&\mu_{i'}(s) = \sum_{{b_{i'} \in \Re}} \mu_{i'-1}({s - h_{i'} b_{i'}}) \cdot g(b_{i'}) \; , \\ &\nu_{i'}({s}) = \sum_{\underset{}{b_{i'+1} \in \Re}} \;\; \nu_{i'+1}({s+h_{i'+1} b_{i'+1}}) \cdot g(b_{i'+1}) \; .
\end{align*}
\end{lemma}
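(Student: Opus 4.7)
The plan is to derive both identities as a straightforward BCJR-style marginalization on the trellis of $\mathcal{B}_j$, exploiting the fact that the exponential $\sym(x) = e^{\kappa x}$ turns the linear inner products in \eqref{eq:c_alpha} and \eqref{eq:c_alpha_bar} into products of per-coordinate factors. First I would expand the inner products coordinatewise: since $\langle \bs{\hat{v}}_j, \bs{\Xi}(\bs{b})\rangle = \sum_{i' \in \mathcal{I}_j} \langle \bs{\hat{v}}_{j,i'}, \bs{\xi}(b_{i'})\rangle$, applying $\sym$ yields
\[
\sym\!\left(\langle \bs{\hat{v}}_j, \bs{\Xi}(\bs{b})\rangle\right) \;=\; \prod_{i' \in \mathcal{I}_j} g(b_{i'}),
\]
and correspondingly $\sym(\langle \tilde{\bs{v}}_j, \bs{\Xi}(\tilde{\bs{b}})\rangle) = \prod_{i' \in \mathcal{I}_j \setminus \{i\}} g(b_{i'})$. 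Substituting these into \eqref{eq:c_alpha} and \eqref{eq:c_alpha_bar} rewrites each marginal as a sum of products of local branch factors over codewords $\bs{b} \in \mathcal{B}_j$ with the appropriate restriction on $b_i$.

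Next I would enforce the parity constraint $\bs{b} \in \mathcal{B}_j$ through the trellis: by construction $\bs{b}$ lies in $\mathcal{B}_j$ iff the associated trellis path starts at $\sigma(1,0)=0$ and ends at $\sigma(1,d_j)=0$, and each state $s \in \mathcal{S}_{i'}$ records the partial syndrome $\sigma(1,i')$. Splitting the codeword at the $i$-th coordinate and inserting a summation over the intermediate state $s = \sigma(1,i) \in \mathcal{S}_i$, I would use the transition relation $s - \sigma(1,i-1) = h_i b_i$ together with $\sigma(i+1,d_j) = -s$ for codewords to decouple the prefix from the suffix conditional on $s$. By \eqref{eq:mu} and \eqref{eq:nu}, the prefix sum is exactly $\mu_{i-1}(s - h_i b_i)$ and the suffix sum is $\nu_i(s)$.

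Applying this split gives \eqref{eq:alpha_bar} for $\sym(C_{j,i,\bar{r}})$, with the outer sum ranging over $b_i \in \Re \setminus \{r\}$ and carrying the branch factor $g(b_i)$ because the $i$-th coordinate is present in both $\bs{\hat{v}}_j$ and $\bs{b}$. For $\sym(C_{j,i,r})$ the $i$-th coordinate is excluded in $\tilde{\bs{v}}_j$ and $\tilde{\bs{b}}$, so the factor $g(r)$ does \emph{not} appear, yet the condition $b_i = r$ still forces the transition $s - h_i r \to s$ at step $i$, yielding \eqref{eq:alpha}. The forward and backward recursions for $\mu_{i'}$ and $\nu_{i'}$ then follow from \eqref{eq:mu} and \eqref{eq:nu} by conditioning the partial-syndrome sums on, respectively, the last and first symbol of the partial sequence.

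The only delicate bookkeeping is tracking this asymmetry between $C_{j,i,r}$ and $C_{j,i,\bar{r}}$, which is precisely what produces the absence versus presence of the factor $g(b_i)$ in the two final formulas; the remainder is a routine BCJR derivation on the trellis of the nonbinary SPC code $\mathcal{B}_j$, whose state and branch complexity at each step is $\mathcal{O}(q)$ and $\mathcal{O}(q^2)$ respectively, so that the total cost is linear in $d_j$ as claimed.
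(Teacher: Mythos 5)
Your derivation is correct and follows essentially the same route as the proof the paper defers to in~[PuFl\_11]: factor $\sym$ of the inner product into per-coordinate branch metrics, split each codeword sum at the partial-syndrome state $s=\sigma(1,i)$ so that prefix and suffix contributions become $\mu_{i-1}(s-h_ib_i)$ and $\nu_i(s)$, and track the presence (for $C_{j,i,\bar{r}}$) versus absence (for $C_{j,i,r}$) of the factor $g(b_i)$ arising from the excluded $i$-th coordinate. A minor bonus is that you carry the arbitrary coefficients $h_{i'}$ through directly, which is exactly the ``straightforward extension'' the paper alludes to.
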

\begin{proof}
The proof of \reflem{lm:bcjr_compute} can be found in \unscite{PuFl_11} for the case where all of the coefficients $\cH_{j,i}$ (for $i \in \cI_j$) equal the ring's multiplicative identity $1$; extension of the proof to handle arbitrary coefficients is straightforward.
\end{proof}
\vspace{5pt}

Here the CN calculations are carried out in two phases: in the first phase,
the forward and backward state metrics are calculated and stored; in the
second phase the marginals $C_{j,i,r}$ and $C_{j,i,\bar{r}}$ are computed
according to \reflem{lm:bcjr_compute}, where the state metrics computed in
first phase are utilized. It may be observed that the aforementioned algorithm
is essentially the same as the BCJR algorithm except for the second phase
where the marginals are calculated.
Note that in general the trellis may contain parallel branches, since some of the entries of the parity-check matrix may be non-invertible elements of the ring.
\subsection{Alternative State Metric for Faster Calculation of $C_{j,i,\bar{r}}$}
The forward state metric $\mu$ as defined in \refeqwo{eq:mu} needs to be
computed for the calculation of $C_{j,i,r}$ and can be reused for the
calculation of $C_{j,i,\bar{r}}$. In \refeqwo{eq:alpha_bar} the algorithm
needs to go through all branches $(s, s') \in \mathcal{S}_{i-1} \times
\mathcal{S}_{i}$, $s' - s \ne h_i r$ for the calculation of
$C_{j,i,\bar{r}}$. If the proposed algorithm is implemented in hardware or on
multicore architectures, then the computation time for $C_{j,i,\bar{r}}$ can
be reduced by parallelizing its calculation. One possibility to parallelize
the calculation of $C_{j,i,\bar{r}}$ is to define a new forward state metric
$\bar{\mu}$ which can be computed in parallel with $\mu$ in the first phase
and reduces the calculations required during the second phase of the
algorithm. For this we define an alternative forward state metric as follows,
\begin{align}
\bar{\mu}_{i'}({s,r}) = \!\!\!\!\!\! \sum_{\underset{\scriptstyle \sigma(1, {i'}) = s, \ b_{i'} \ne r}{(b_1,\ldots,b_{i'})}} \prod_{i'' = 1}^{i'} g(b_{i''}), \;\; s \in \mathcal{S}_{i'}, i' \in \mathcal{I}_j, r \in \Re^{-} \label{eq:new_mu}
\end{align}
with $\bar{\mu}_{0}({s,r}) = 0, \; \forall s \in \mathcal{S}_0, \; \forall r \in \Re^{-}.$ 
It should be noted that due to the condition $b_{i'} \ne r$,
$\bar{\mu}_{i'}({s,{r}})$ cannot be calculated recursively from
$\bar{\mu}_{i'-1}$; instead it is calculated together with $\mu_{i'}$ from
$\mu_{i'-1}$ as follows,
\begin{align*}
&\bar{\mu}_{i'}({s,{r}}) = \sum_{\underset{}{b_{i'} \in \Re\setminus\{r\}}} \mu_{i'-1}({s - h_{i'} b_{i'}}) \cdot g(b_{i'}).
\end{align*}
With the help of the alternative forward state metric given in \eqref{eq:new_mu}, the expression \refeqwo{eq:alpha_bar} of \reflem{lm:bcjr_compute} can be rewritten as
\begin{align}
\sym\Big(C_{j,i,\bar{r}}\Big) = \sum_{s \in \mathcal{S}_{i}} \bar{\mu}_{i}({s,{r}}) \cdot \nu_{i}(s) . \label{eq:new_alpha_bar}
\end{align}

The forward state metric $\bar{\mu}_{i'}({s,r})$ requires the calculation and storage of an additional $q-1$ values for each state $s \in \mathcal{S}_{i'}$ during the first phase. Hence the storage requirement for the calculation of $C_{j,i,\bar{r}}$ with \refeqwo{eq:new_alpha_bar} increases by a factor of $q$.
However, all additional state metric values can be calculated in parallel with $\mu$ which does not effect the run time of the first phase of the algorithm. Also, the second phase of the algorithm needs to go through only $q$ states instead of $q(q-1)$ branches, hence the overall run time for computing $C_{j,i,\bar{r}}$ is reduced with the state metric $\bar{\mu}$.
\subsection{Calculation of Marginals with $\kappa \to \infty$}
In \reflem{lm:bcjr_compute}, $\kappa$ is assumed to be finite. However, for many practical applications we are interested in $\kappa \to \infty$. According to \reflem{le:lemma_k_infty}, for $\kappa \to \infty$ we need to calculate $(C_{j,i,r}^{\infty} - C_{j,i,\bar{r}}^{\infty})$ to update the corresponding variables.
The marginals $C_{j,i,r}^{\infty}$ and $C_{j,i,\bar{r}}^{\infty}$ are here obtained as the limit $\kappa \to \infty$ of \refeqwo{eq:c_alpha} and \refeqwo{eq:c_alpha_bar}, respectively, i.e.,
\begin{equation}
C_{j,i,r}^{\infty} \triangleq - \min_{\underset{\scriptstyle b_i = r}{\bs{b} \in \mathcal{B}_j}} \langle -\tilde{\bs{v}}_j, \bs{\Xi}(\tilde{\bs{b}}) \rangle, \; 
C_{j,i,\bar{r}}^{\infty} \triangleq - \min_{\underset{\scriptstyle b_i \ne r}{\bs{b} \in \mathcal{B}_j}} \left\langle -\bs{\hat{v}}_j, \bs{\Xi}(\bs{b}) \right\rangle .\label{eq:new_c}
\end{equation}
Thus $C_{j,i,r}^{\infty}$ and $C_{j,i,\bar{r}}^{\infty}$ can be obtained by replacing all ``product'' operations with ``sum'' operations and similarly by replacing all ``sum'' operations with ``min'' operations in \refeqwo{eq:c_alpha} and \refeqwo{eq:c_alpha_bar} (marginals with finite $\kappa$). In \refeqwo{eq:c_alpha} and \refeqwo{eq:c_alpha_bar} the marginalization is performed in the sum-product semiring. However, for $\kappa \to \infty$ the marginalization is performed in the min-sum semiring and hence the marginals of \refeqwo{eq:new_c} can be computed with a trellis-based variant of the MS algorithm. If we replace all ``product'' operations with ``sum'' operations and all ``sum'' operations with ``min'' operations in \refeqwo{eq:mu}, \refeqwo{eq:nu}, \refeqwo{eq:alpha_bar}, \refeqwo{eq:alpha}, \refeqwo{eq:new_mu} and \refeqwo{eq:new_alpha_bar}, and then redefine the branch metric as $g(b_{i'}) = \langle\bs{\hat{v}}_{j,i'}, \bs{\xi}(b_{i'})\rangle$, then the resulting equations 
can be used on the trellis of the nonbinary SPC code $\mathcal{B}_j$ to compute the marginals of \refeqwo{eq:new_c}. This trellis-based variant of the MS algorithm is related to the Viterbi algorithm.
%
%
%
%
%
%
%
\section{Nonbinary Subgradient Low-Complexity \\ LP Decoding Algorithm}
In \unscite{VoKo_06} the authors proposed the subgradient LCLP decoding algorithm for binary LDPC codes. The objective function of the dual LP (denoted \textbf{DLPD2} in \unscite{VoKo_06}) can be expressed as a sum of several component functions. Based on this observation, the authors proposed the use of incremental subgradient methods \unscite{Ne_02} for the maximization of the dual objective function in \textbf{DLPD2}. 

The main idea behind incremental subgradient methods is to process each component function separately where variables related to the selected component function are updated immediately. An iteration of the incremental subgradient method can be seen as a sequence within which each component function is processed exactly once \unscite{Ne_02}.

Similar to the dual LP {\bf DLPD} of \unscite{VoKo_06}, the objective function (which is concave but not everywhere differentiable) of \textbf{DNBLPD} can also be expressed as the sum of component functions. Hence it is also possible to use incremental subgradient methods to find the solution of \textbf{DNBLPD}. As in the previous section, we assume $\bs{\hat{u}}_{i,j} = -\bs{\hat{v}}_{j,i}$ for all $(i, j) \in \cE$.

To develop the nonbinary subgradient LCLP decoding algorithm, we consider the component function given by the term in the objective function related to CN $j \in \mathcal{J}$, i.e.,
\begin{align} 
m_j \left( \bs{\hat{v}}_j \right) = \min_{\bs{b} \in \mathcal{B}_j} \left\langle-\bs{\hat{v}}_j, \bs{\Xi}(\bs{b})\right\rangle \; . \label{eq:sub_local_fun}
\end{align}
We provide the definition of the subgradient for this part of the objective function in the following lemma.

\begin{lemma} \label{lm:sub}
For the term in the objective function related to the CN  $j \in \mathcal{J}$ given in \refeqwo{eq:sub_local_fun},
a subgradient is given by
\begin{align*}
\bs{s}_j(\bs{\hat{v}}_j) &= -\bs{\Xi} \left(\argmin_{\bs{b} \in \mathcal{B}_j}  \langle -\bs{\hat{v}}_j, \; \bs{\Xi}(\bs{b}) \rangle \right) .
\end{align*}
\end{lemma}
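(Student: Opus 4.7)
The plan is to recognize $m_j$ as the pointwise minimum of a family of affine functions of $\bs{\hat{v}}_j$ indexed by $\bs{b} \in \mathcal{B}_j$, and then invoke (or verify from first principles) the standard fact that the gradient of any minimizing affine piece supplies a supergradient of the pointwise minimum. Since $\mathcal{B}_j$ is a finite set, this is an entirely elementary argument and no regularity conditions need to be checked.

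More concretely, first I would observe that for each fixed $\bs{b} \in \mathcal{B}_j$, the map $\bs{\hat{v}}_j \mapsto \langle -\bs{\hat{v}}_j, \bs{\Xi}(\bs{b})\rangle$ is linear in $\bs{\hat{v}}_j$ with (ordinary) gradient equal to $-\bs{\Xi}(\bs{b})$. Being the minimum of finitely many linear functions, $m_j$ is concave and piecewise linear in $\bs{\hat{v}}_j$. Next I would pick any $\bs{b}^\star \in \argmin_{\bs{b} \in \mathcal{B}_j}\langle -\bs{\hat{v}}_j, \bs{\Xi}(\bs{b})\rangle$, so that by definition $m_j(\bs{\hat{v}}_j) = \langle -\bs{\hat{v}}_j, \bs{\Xi}(\bs{b}^\star)\rangle$, and set $\bs{s}_j(\bs{\hat{v}}_j) = -\bs{\Xi}(\bs{b}^\star)$.

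The verification then reduces to a two-line calculation. For any other point $\bs{\hat{v}}_j'$, the pointwise-minimum property gives
\begin{align*}
m_j(\bs{\hat{v}}_j') \;&=\; \min_{\bs{b} \in \mathcal{B}_j} \langle -\bs{\hat{v}}_j', \bs{\Xi}(\bs{b})\rangle \;\le\; \langle -\bs{\hat{v}}_j', \bs{\Xi}(\bs{b}^\star)\rangle \\
&=\; \langle -\bs{\hat{v}}_j, \bs{\Xi}(\bs{b}^\star)\rangle + \langle -\bs{\Xi}(\bs{b}^\star), \bs{\hat{v}}_j' - \bs{\hat{v}}_j\rangle \\
&=\; m_j(\bs{\hat{v}}_j) + \langle \bs{s}_j(\bs{\hat{v}}_j), \bs{\hat{v}}_j' - \bs{\hat{v}}_j\rangle,
\end{align*}
which is precisely the defining inequality of a subgradient (in the concave/supergradient sense used in subgradient-based maximization, as is standard in the incremental subgradient literature cited earlier).

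There is essentially no hard step here; the only subtlety worth flagging is a notational/sign-convention one, namely that since we are \emph{maximizing} the concave cost of \textbf{DNBLPD}, the term ``subgradient'' in \reflem{lm:sub} refers to the supergradient of a concave function rather than the subgradient of a convex function. Once that convention is fixed, the argument above is complete, and the nonuniqueness of the $\argmin$ (when several $\bs{b}^\star$ tie) causes no issue because any choice gives a valid member of the superdifferential set.
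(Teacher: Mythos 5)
Your proof is correct and is essentially the same as the paper's: both verify the supergradient inequality $m_j(\bs{v}_j') \le m_j(\bs{\hat{v}}_j) + \langle \bs{s}_j(\bs{\hat{v}}_j), \bs{v}_j' - \bs{\hat{v}}_j\rangle$ by picking a minimizer $\bs{b}'$ and using linearity of $\langle -\cdot, \bs{\Xi}(\bs{b}')\rangle$ together with the pointwise-minimum property. Your remark on ties also matches the paper's closing note that any (combination of) minimizing $\bs{\Xi}(\bs{b})$ yields a valid subgradient.
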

\begin{proof}
For $\bs{s}_j(\bs{\hat{v}}_j)$ to be a subgradient of $m_j(\bs{\hat{v}}_j)$, the following inequality must hold \unscite{Ne_02}
\begin{align}
m_j(\bs{v}_j') \le m_j(\bs{\hat{v}}_j) + \langle \bs{s}_j(\bs{\hat{v}}_j), \;  \bs{v}_j' - \bs{\hat{v}}_j \rangle \label{eq:to_prove}
\end{align}
for all $\bs{v}_j' \in \mathbb{R}^{(q-1)|\mathcal{I}_j|}$. We define
\begin{equation}
\bs{b}' \triangleq \argmin_{\bs{b} \in \mathcal{B}_j}  \langle -\bs{\hat{v}}_j, \; \bs{\Xi}(\bs{b}) \rangle.
\label{eq:bprime}
\end{equation}
With this, we obtain
\begin{align*}
&m_j(\bs{\hat{v}}_j) + \langle \bs{s}_j(\bs{\hat{v}}_j), \;  \bs{v}_j' - \bs{\hat{v}}_j \rangle \\
&= \min_{\bs{b} \in \mathcal{B}_j}  \langle -\bs{\hat{v}}_j, \; \bs{\Xi}(\bs{b}) \rangle + \langle \bs{s}_j(\bs{\hat{v}}_j), \;  \bs{v}_j' \rangle - \langle \bs{s}_j(\bs{\hat{v}}), \; \bs{\hat{v}}_j \rangle \\
&= \langle -\bs{\hat{v}}_j, \; \bs{\Xi}(\bs{b}') \rangle + \langle -\bs{\Xi}(\bs{b}'), \;  \bs{v}_j' \rangle - \langle -\bs{\Xi}(\bs{b}'), \; \bs{\hat{v}}_j \rangle \\
&= \langle -\bs{v}_j', \; \bs{\Xi}(\bs{b}') \rangle \\
&\ge \min_{\bs{b} \in \mathcal{B}_j}  \langle -\bs{v}_j', \; \bs{\Xi}(\bs{b}) \rangle \\
&= m_j(\bs{v}_j') \; ,
\end{align*}
thereby proving \eqref{eq:to_prove} and the fact that $\bs{s}_j(\bs{\hat{v}}_j)$ is a subgradient of $m_j(\bs{\hat{v}}_j)$.

Note that if more than one vector $\bs{b} \in \mathcal{B}_j$ achieves the minimum in \eqref{eq:bprime}, a subgradient is given by the negative of an arbitrary linear combination of the corresponding vectors $\bs{\Xi}(\bs{b})$.
\end{proof}
\vspace{5pt}

The subgradient of the above lemma is denoted by {\bf S}${}_{\bs \Xi}$.
It can be observed that the subgradient {\bf S}${}_{\bs \Xi}$, which is a generalization of the subgradient given in \unscite{VoKo_06} for binary codes, can be efficiently obtained with the help of the Viterbi algorithm on the trellis of the nonbinary SPC code $\mathcal{B}_j$. Once the subgradient is obtained, the dual variable $\bs{\hat{v}}_j$ can be updated as \unscite{Ne_02}
\begin{align}
\bs{\hat{v}}_j \leftarrow \bs{\hat{v}}_j + \vartheta_l \cdot \bs{s}_j(\bs{\hat{v}}_j), \label{eq:sub_update}
\end{align}
where $\vartheta_l \in \mathbb{R}_{>0}$ is the step size at iteration $l$. The dual variable  $\bs{\hat{u}}_i$ related to VN $i \in \mathcal{I}$ can be updated in an analogous manner. 
The subgradient for the VN update can be computed with some modifications to the VN calculations used in the nonbinary SP algorithm.
One iteration of the algorithm is completed when all check-node-related updates of dual variables $\bs{\hat{v}}_j, \; j \in \mathcal{J}$, and then, all variable-node-related updates of dual variables $\bs{\hat{u}}_i, \; i \in \mathcal{I}$, have been (sequentially) performed.
The convergence of this algorithm is guaranteed for a suitably chosen step size sequence $\{\vartheta_l\}_{l \ge 1}$ \unscite{Ne_02}. The decision rule to obtain the estimate of the symbols from the dual variables $\bs{\hat{u}}_i, \; i \in \mathcal{I}$, is the same as the one given in \refsec{sec:lclp}.

\begin{figure*}
\begin{minipage}[b]{0.48\linewidth}
\centering
\includegraphics[width=0.99\columnwidth, keepaspectratio]{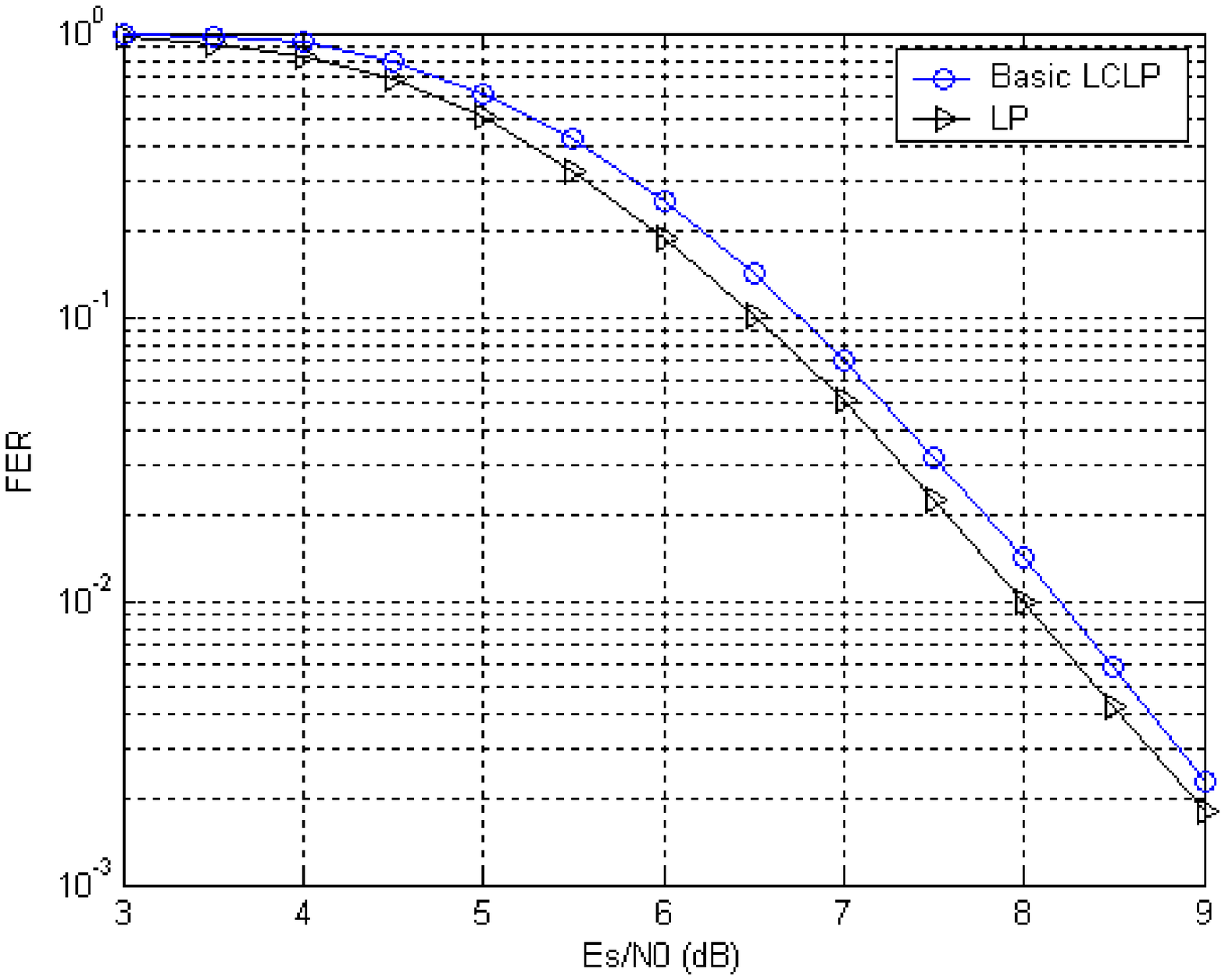}
\vspace{-10pt}
\caption{Frame error rate for the $(80, 48)$ LDPC code over $\mathbb{Z}_4$ under QPSK modulation.} 
\label{fig:FER_80_Q4}
\vspace{-10pt}
\end{minipage}
\hspace{0.4cm}
\begin{minipage}[b]{0.48\linewidth}
\includegraphics[width=1.0\columnwidth, keepaspectratio]{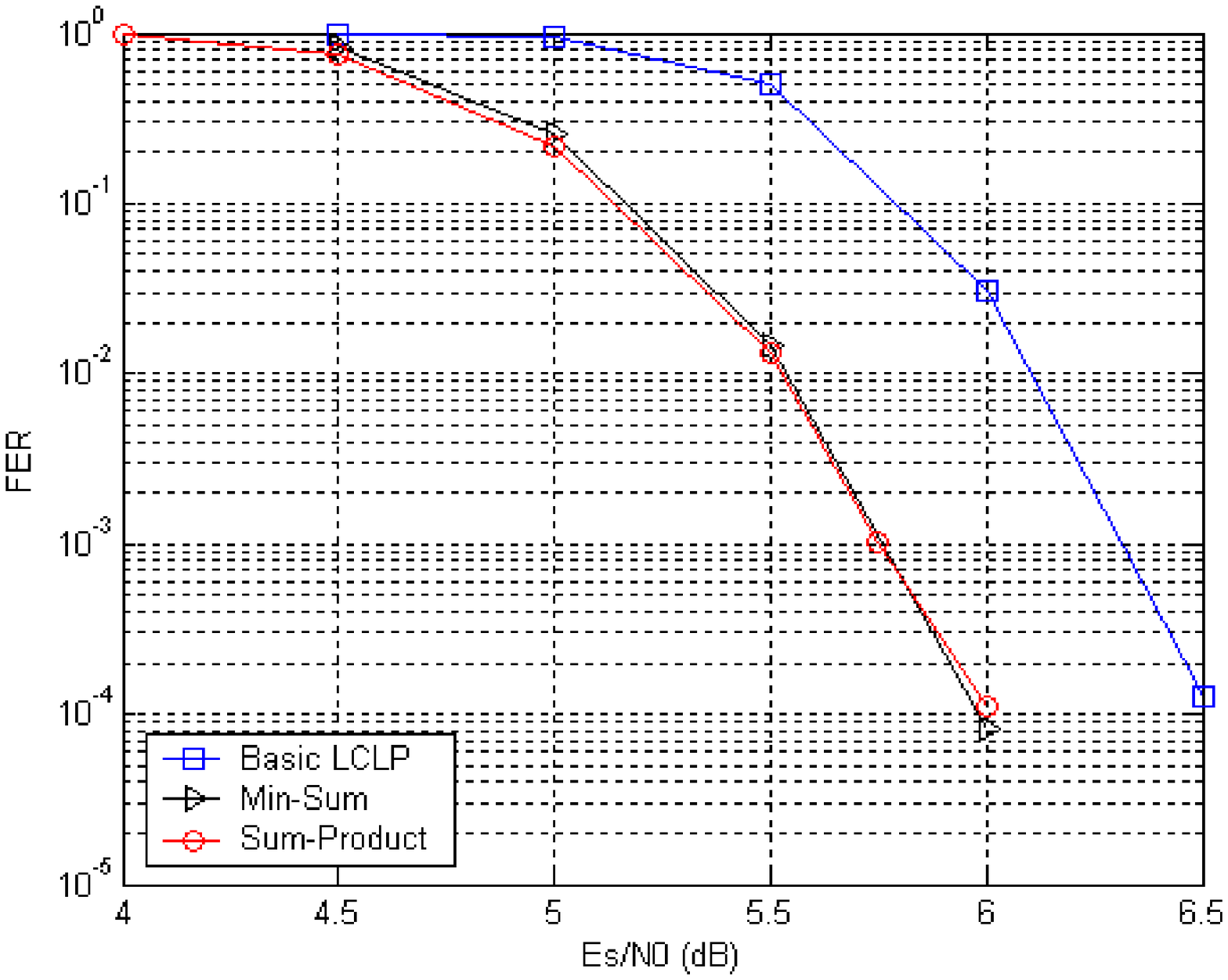}
\centering
\caption{Frame error rate for the $(1055, 424)$ LDPC code over GF($4$) under QPSK modulation.} 
\label{fig:FER_1055_Q4}
\vspace{-10pt}
\end{minipage}
\end{figure*}

The choice of step size sequence $\{\vartheta_l\}_{l \ge 1}$ can also affect the convergence as well as the error-correcting performance of the algorithm. Some possible step size rules (e.g., constant, diminishing, dynamic etc.) are discussed in \unscite{Ne_02}. 
It was determined through extensive simulation works that the following {\it staircase type} step size rule works best for most nonbinary LDPC codes (independent of the code parameters):
\begin{eqnarray*}
\vartheta_l = \left\{ 
\begin{array}{l l}
  \vartheta_{l-1} \times 0.8 & \; \text{if} \; l \text{ is divisible by } 20 \\
  \vartheta_{l-1} & \; \text{otherwise} .\\
\end{array} \right. 
\end{eqnarray*}
The initial value $\vartheta_1$ is also determined by the simulation.

The nonbinary basic LCLP decoding algorithm is an \emph{edge-by-edge}
algorithm, i.e., it processes each edge in the Tanner graph separately. During
the decoding of $(d_{\mathrm{v}}, d_{\mathrm{c}})$-regular LDPC codes with the
nonbinary basic LCLP decoding algorithm, the modified BCJR algorithm of
\reflem{lm:bcjr_compute} is utilized $m \cdot d_{\mathrm{c}}$ times, and VN
calculations are carried out $n \cdot d_v$ times, in a single iteration.\footnote{However, the complexity of the nonbinary basic LCLP
  decoding algorithm can be significantly reduced with the help of a suitable
  update schedule and a suitable reuse of partial results. \label{footnote:remark:complexity:1}} In
contrast to this, the nonbinary subgradient LCLP decoding algorithm works on a
\emph{node-by-node} basis, i.e., it updates all the edges related to a CN or a
VN simultaneously. Hence the nonbinary subgradient algorithm runs the Viterbi
algorithm only $m$ times, and performs the VN calculations $n$ times, in a
single iteration. Also, the Viterbi algorithm is computationally less
expensive than the modified BCJR algorithm used in the nonbinary basic LCLP
decoding algorithm. This reduces the complexity of a single iteration of the
nonbinary subgradient algorithm significantly.  One more advantage of the
nonbinary subgradient LCLP decoding algorithm is the ease of computation of
the dual function value (the contribution of the component function given in
\reflem{lm:sub} towards the global function is computed by the Viterbi
algorithm in the form of the forward state metric).
Similarly, the component function value is also output as a by-product of the VN computations. Hence the global function value can be easily computed during each iteration. The algorithm may be deemed to have converged to the solution of {\bf DNBLPD} when the difference between the global function values computed during successive iterations is close to zero; this criterion may be used to efficiently implement an early stopping mechanism. The global function value computed during each iteration can also be utilized to adapt the step-size dynamically to improve the convergence and/or error-correcting performance of the nonbinary subgradient LCLP decoding algorithm.

The complexity per iteration of the nonbinary SP and MS algorithms is dominated by that of the CN calculation, which is $\mathcal{O}(q^2)$ \unscite{DeFo_07}.
The nonbinary SP and MS algorithms also work on a node-by-node basis, and
consequently the nonbinary SP (resp. MS) algorithm uses the BCJR
(resp. Viterbi) algorithm $m$ times during each iteration. As mentioned
earlier, the nonbinary basic LCLP decoding algorithm uses the modified BCJR
algorithm $m \cdot d_{\mathrm{c}}$ times (however, see
Footnote~\ref{footnote:remark:complexity:1}) and hence its complexity per
iteration is significantly higher than that of the nonbinary SP or MS
algorithm. In contrast, the nonbinary subgradient LCLP decoding algorithm,
which uses the Viterbi algorithm $m$ times during each iteration, has
complexity per iteration similar to that of the nonbinary MS algorithm and
smaller than that of the nonbinary SP algorithm.
\section{Simulation Results} \label{sec:result}
This section presents simulation results for the nonbinary basic and subgradient LCLP decoding algorithms. We use a cyclic edge-update schedule for the nonbinary basic LCLP decoding algorithm. The nonbinary basic LCLP decoding algorithm uses the trellis-based CN calculations described in \refsec{sec:bcjr} and we consider $\kappa \to \infty$ for all simulations. The MS and SP algorithms also use the trellis of the nonbinary SPC code for CN processing. We use the binary $(204, 102)$ MacKay LDPC matrix and the $(155, 64)$, $(755, 334)$, and $(1055, 424)$ group-structured LDPC matrices from \unscite{TaSr_01}, but with nonzero parity-check matrix entries replaced by randomly selected nonzero entries from the finite ring. The $(155, 64)$ and $(1055, 424)$ LDPC codes have parity-check matrix entries from $\mathbb{Z}_4$ and GF($4$), respectively, and the $(204, 102)$ and $(755, 334)$ LDPC codes have parity-check matrix entries from GF($8$). We also use the LDPC code of length $n = 80$ over $\mathbb{Z}_4$ used in \unscite{FlSk_09} which has rate $R(\mathcal{C}) = 0.6$ and constant check-node degree of $5$. 
\begin{figure*}
\begin{minipage}[b]{0.48\linewidth}
\centering
\includegraphics[width=1.0\columnwidth, keepaspectratio]{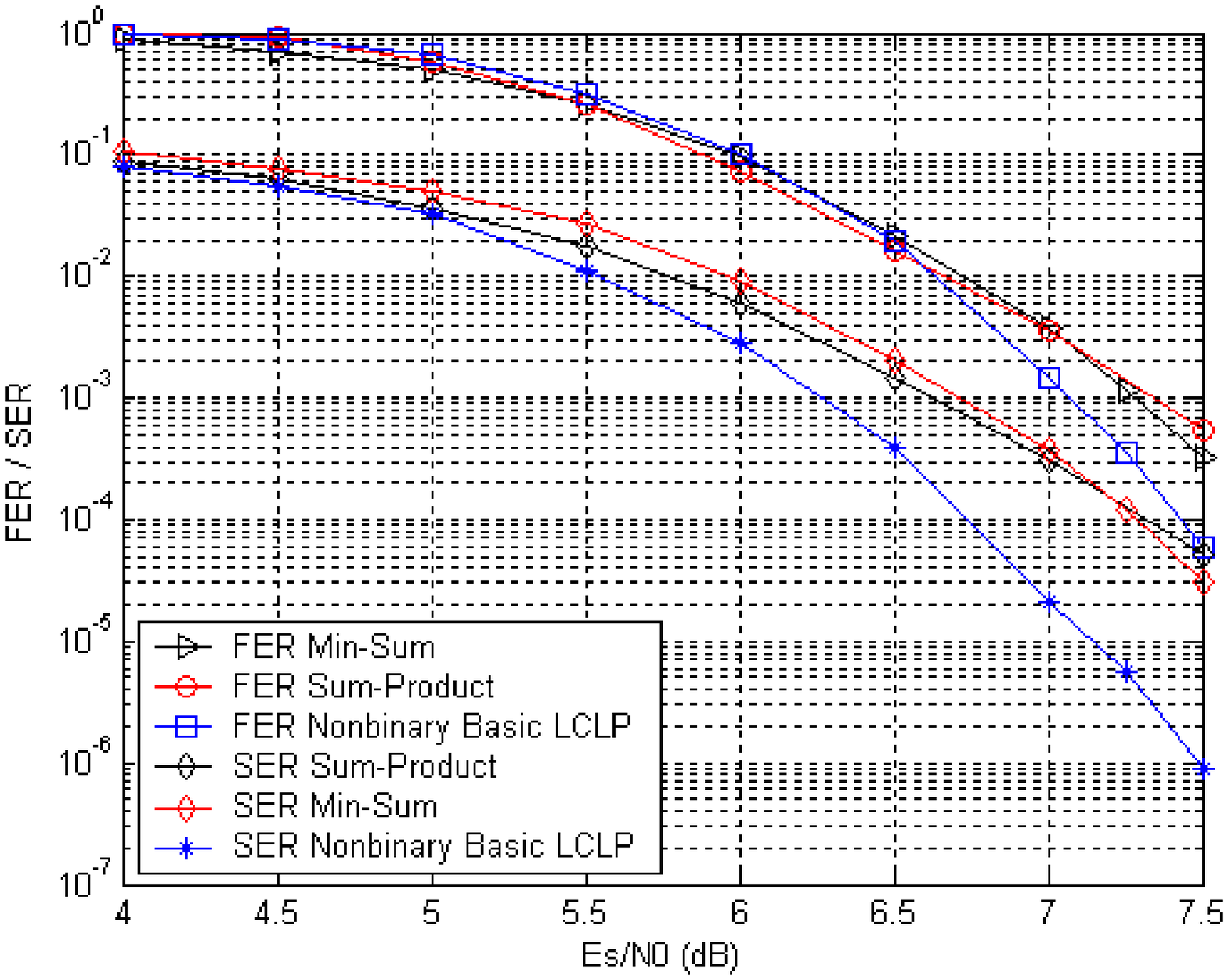}
\caption{Frame/symbol error rate for the $(155, 64)$ quaternary LDPC code under QPSK modulation.} 
\label{fig:FER_155_Q4}
\vspace{-5pt}
\end{minipage}
\hspace{0.4cm}
\begin{minipage}[b]{0.48\linewidth}\centering
\centering
\includegraphics[width=1.0\columnwidth, keepaspectratio]{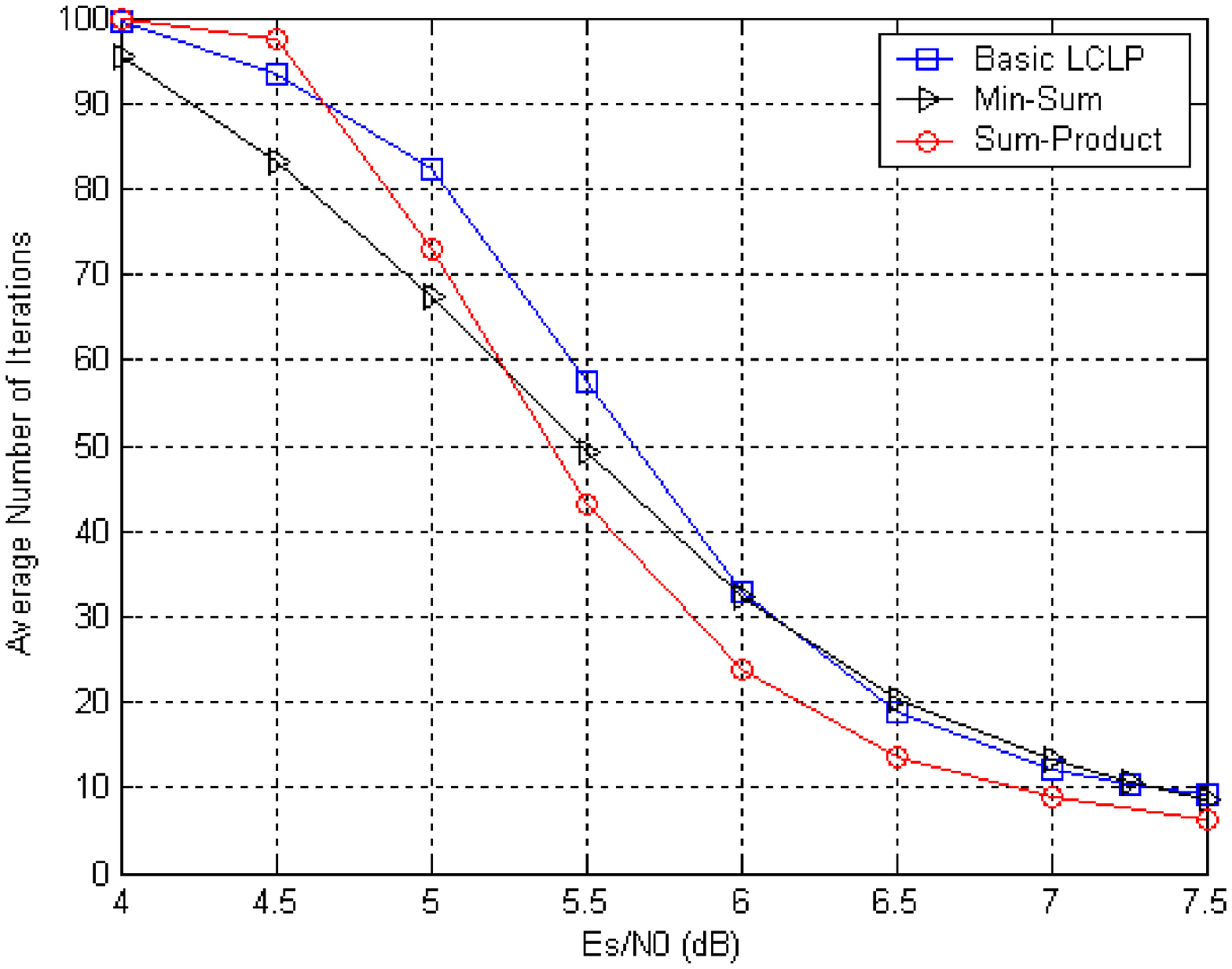}
\caption{Average number of iterations required to converge for the $(155, 64)$ quaternary LDPC code under QPSK modulation.} 
\label{fig:iter_155_Q4}
\vspace{-5pt}
\end{minipage}
\end{figure*}
%
The $(155, 64)$, $(1055, 424)$, and $(755, 334)$ matrices are $(3,5)$-regular group-structured LDPC matrices; hence there are $5$ nonzero entries in each row. For the $(155, 64)$ and $(1055, 424)$ matrices, we set all non-zero entries to $1 \; (=\zeta^0)$ in each row ($\zeta$ is a primitive element of the finite field under consideration). For the $(755, 334)$ LDPC matrix, the first, second, third, fourth, and fifth nonzero entry in each row is set to $1$, $\zeta^2$, $\zeta^4$, $\zeta^6$, and $1$, respectively.
The $(204, 102)$ LDPC matrix is a $(3,6)$-regular matrix and we set the first, second, third, fourth, fifth, and sixth nonzero entry in each row to elements $1$, $\zeta^2$, $\zeta^4$, $\zeta^6$, $\zeta^1$, and $1$, respectively.

Furthermore, we assume transmission over the AWGN channel where for the $(155, 64)$ and $(1055, 424)$ LDPC codes the nonbinary symbols are directly mapped to quaternary phase-shift keying (QPSK) signals and for the $(204, 102)$ and $(755, 334)$ codes, nonbinary symbols are directly mapped to 8-PSK signals. We simulate up to $100$ frame errors per simulation point. Unless otherwise specified, the maximum number of iterations is set to $100$.

\reffig{fig:FER_80_Q4} compares the frame error rate (FER) for nonbinary LP decoding of \unscite{FlSk_09} (solution performed using the Simplex solver) with that of the nonbinary basic LCLP decoding algorithm. As can be observed, the error correcting performance of the nonbinary basic LCLP decoding algorithm is within $0.2$ dB of the LP decoder. Note however that since the nonbinary LCLP decoding algorithm only approximately solves {\bf PNBLPD}, it does not possess the ML certificate property.

The FER curves for the $(1055, 424)$ LDPC code is shown in Figure \reffigwo{fig:FER_1055_Q4} and the FER of nonbinary basic LCLP decoding is within $0.5$dB of that of the SP and MS algorithm.

The error-correcting performance of the $(155, 64)$ LDPC code is shown in Figure \reffigwo{fig:FER_155_Q4} where the FER and symbol error rate (SER) of the nonbinary basic LCLP decoding algorithm is compared with that of the SP and MS algorithm. For this code, the FER of the nonbinary basic LCLP decoding algorithm is similar to that of the SP and MS algorithms for low and moderate SNR levels; however, it is better by around $0.25$dB for higher SNR levels. The SER of the nonbinary basic LCLP decoding algorithm is better than that of the SP and MS algorithms for all tested SNR values. \reffig{fig:iter_155_Q4} shows the average number of iterations required for the nonbinary basic LCLP, SP, and MS algorithms to converge during the decoding of the $(155, 64)$ LDPC code. The nonbinary basic LCLP decoding algorithm requires around $10$\% to $15$\% more iterations than the MS algorithm to converge for lower SNR levels, whereas nonbinary basic LCLP and MS algorithms require a similar number of iterations for moderate to high SNR values 
(i.e., in the waterfall region). 
Hence the nonbinary basic LCLP decoding algorithm outperforms the MS decoding algorithm in terms of the error-correcting performance for the $(155, 64)$ LDPC code. The SP algorithm requires around 30\% less iterations to converge compared to nonbinary basic LCLP decoding algorithm for most SNR values.

The FER curve for the $(204, 102)$ LDPC code is shown in Figure \reffigwo{fig:FER_204_Q8}. In this case the FER of nonbinary basic LCLP decoding is within $1.5$dB and $0.75$dB of that of the SP and MS algorithms, respectively.

\reffig{fig:FER_755_Q8} shows the FER curves for the $(755, 334)$ LDPC code. Unlike the above mentioned results, here the FER performance of the nonbinary basic LCLP decoding algorithm is around $1.2$dB and $1.7$dB worse than that of the SP and MS algorithm, respectively, for low to moderate SNR values. However, for SNR values higher than $9$dB, the SP and MS algorithm shows an error-floor effect and by around $10$dB its FER is the same as that of the nonbinary basic LCLP decoding algorithm. 
After $10.25$dB, the nonbinary basic LCLP decoding algorithm also shows the error floor effect but still has better FER than the MS and SP algorithm. The FER of nonbinary basic LCLP decoding algorithm at $10.25$dB and $10.5$dB was simulated for $60$ frame errors per simulation point for this code.
A similar phenomenon was also observed in \unscite{Bu_09} where the binary LCLP decoding algorithm outperformed the MS algorithm in the error-floor region. It is important to note that the binary $(755, 334)$ LDPC code is constructed with the same algorithm as the other $(3, 5)$ group-structured LDPC codes \unscite{TaSr_01}; however 
its minimum distance is relatively low compared to other binary LDPC codes from the same family, and hence one can expect the binary MS (or SP) algorithm to show an error-floor effect. Our observation of a high error-floor for the $(755, 334)$ LDPC code over GF($8$) could be due to a similar problem with respect to the Lee distance.
\begin{figure*}
\begin{minipage}[b]{0.48\linewidth}\centering
\centering
\includegraphics[width=1.0\columnwidth, keepaspectratio]{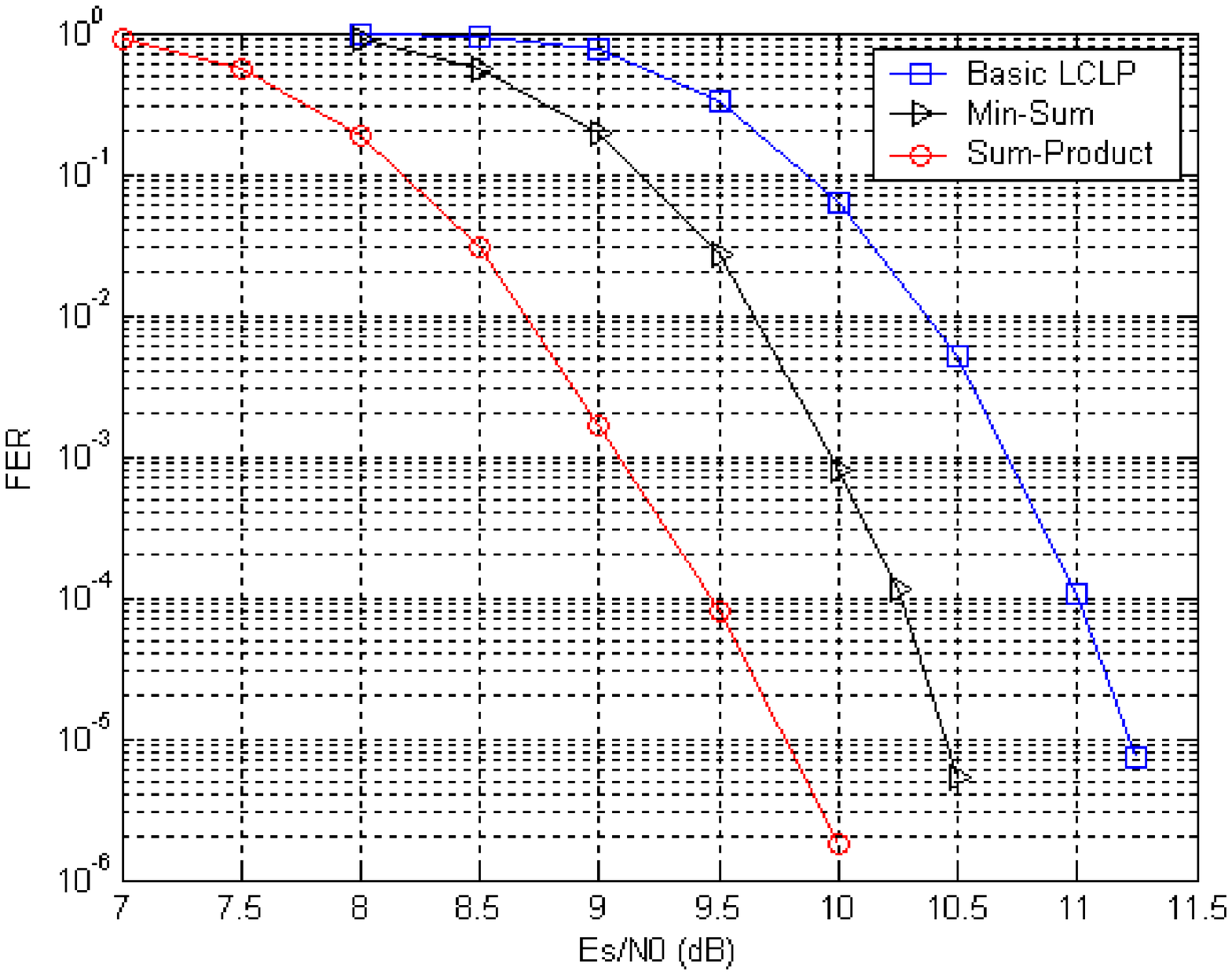}
\caption{Frame error rate for the $(204, 102)$ LDPC code over GF($8$) under 8-PSK modulation.} 
\label{fig:FER_204_Q8}
\vspace{-5pt}
\end{minipage}
\hspace{0.4cm}
\begin{minipage}[b]{0.48\linewidth}\centering
\centering
\includegraphics[width=1.0\columnwidth, keepaspectratio]{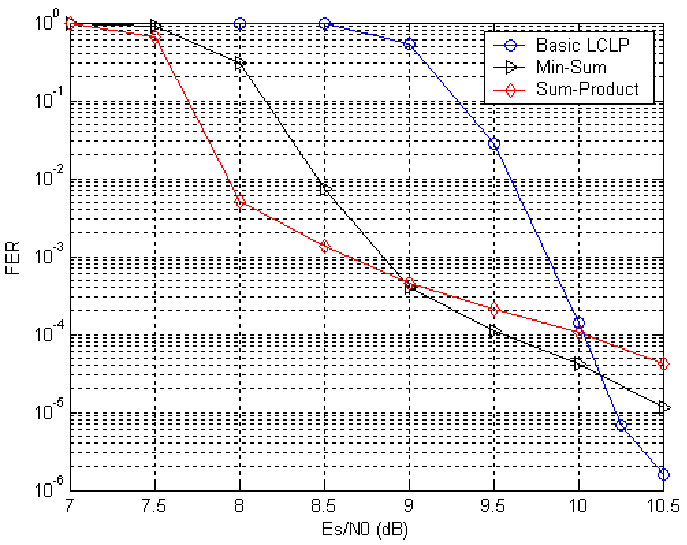}
\caption{Frame error rate for the $(755, 334)$ LDPC code over GF($8$) under 8-PSK modulation.} 
\label{fig:FER_755_Q8}
\vspace{-5pt}
\end{minipage}
\end{figure*}

The FER of the $(155, 64)$ LDPC code over $\mathbb{Z}_4$ for the nonbinary subgradient LCLP decoding algorithm is shown in Figure \reffigwo{fig:FER_155_Q4_sub}. 
The FER of the nonbinary basic LCLP decoding algorithm is also shown here for reference, where the maximum number of iterations is set to $100$. Both the
constant and staircase type step-size rules are used for these simulations. 
Also, \reffig{fig:iter_155_Q4_sub} shows the average number of iterations required for the nonbinary subgradient LCLP decoding algorithm to converge, with different step-size rule combinations (maximum $100$ iterations). The initial value of the step-size at the first iteration for the simulations of \reffig{fig:FER_155_Q4_sub} was optimized through simulation, and for the constant step-size rule it is $0.08$ whereas for the staircase type step-size rule it is $0.15$.
%
%
%
%
The nonbinary subgradient LCLP decoding algorithm with 
staircase type step-size rule
has better FER than the 
constant step-size rule, while requiring a similar average number of iterations to converge.

The FER of the nonbinary subgradient LCLP decoding algorithm with 
staircase type step-size rule
is $0.38$dB away from the FER of the nonbinary basic LCLP decoding algorithm for a maximum of $100$ iterations and is better by $0.06$dB for a maximum of $200$ iterations. However, it requires approximately 3 times as many iterations on average to converge than the nonbinary basic LCLP decoding algorithm. As was already discussed in the previous section, the complexity of a single iteration of the nonbinary subgradient LCLP decoding algorithm is significantly lower than that of the nonbinary basic LCLP decoding algorithm. However, this complexity advantage is somewhat mitigated by the fact that the nonbinary subgradient LCLP decoding algorithm requires a higher number of iterations than the nonbinary basic LCLP decoding algorithm to reach a similar FER for a given SNR value.

For the $(204, 102)$ LDPC code, if 
the same step-size rule and maximum number of iterations is used, then the nonbinary subgradient LCLP decoding algorithm 
requires around $0.75$dB more transmit power than the nonbinary basic LCLP decoding algorithm to reach same FER. 

For the $(1055, 424)$ and the $(755, 334)$ LDPC codes, the FER of the nonbinary subgradient LCLP decoding algorithm which uses 
the staircase type step-size rule (maximum $200$ iterations) is similar to or better than that of the nonbinary basic LCLP decoding algorithm (maximum $100$ iterations). For these simulations, the initial value of the step-size (again optimized through simulation) 
for the $(1055, 424)$ code was $0.20$ and for the $(755, 334)$ code was $0.09$.

\begin{figure*}
\begin{minipage}[b]{0.48\linewidth}
\centering
\includegraphics[width=1.0\columnwidth, keepaspectratio]{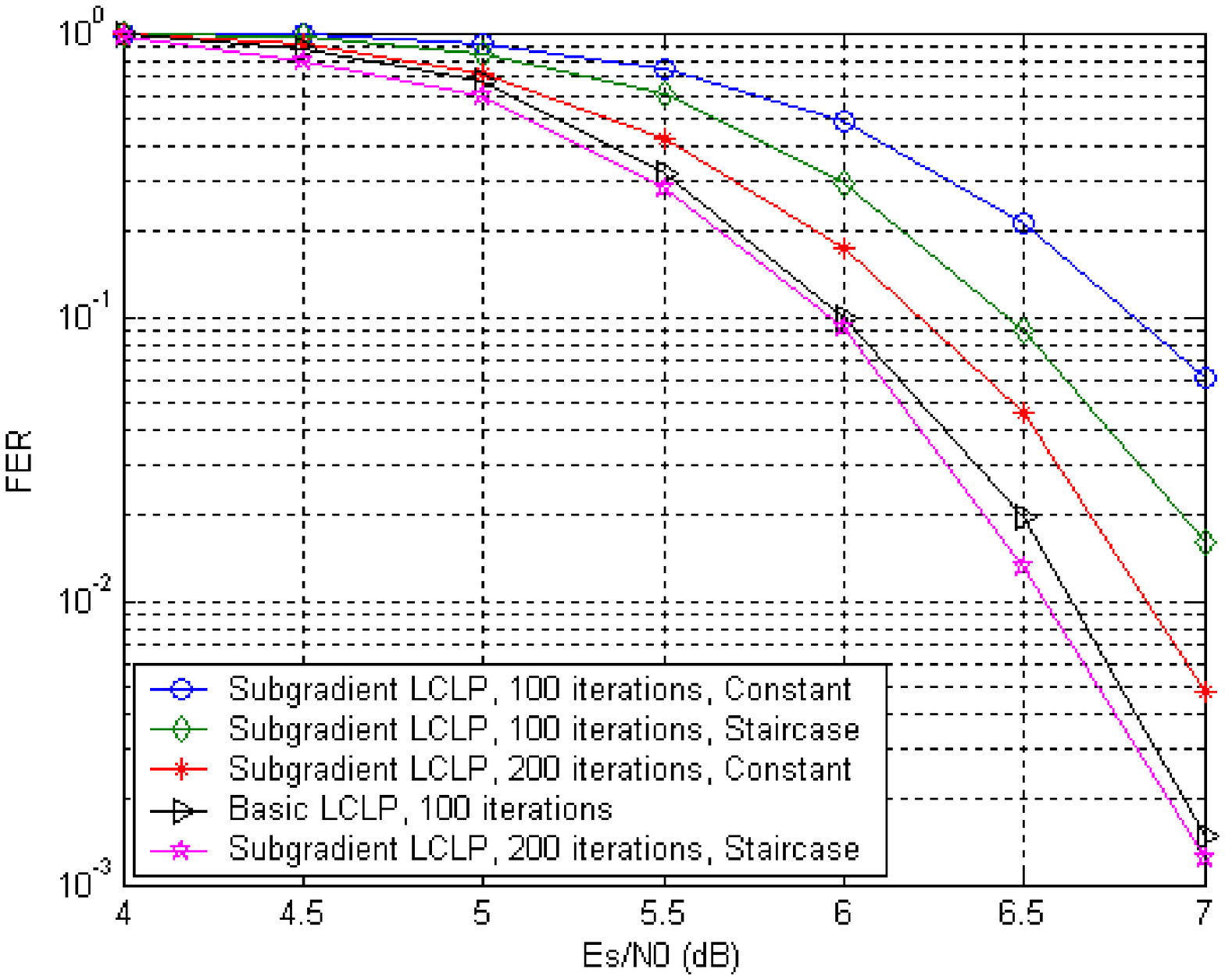}
\caption{Frame error rate for the $(155, 64)$ quaternary LDPC code under QPSK modulation. The FER performance of the nonbinary basic LCLP decoding algorithm is compared with that of the nonbinary subgradient LCLP decoding algorithm (with different step-size rule).} 
\label{fig:FER_155_Q4_sub}
\vspace{-5pt}
\end{minipage}
\hspace{0.4cm}
\begin{minipage}[b]{0.48\linewidth}
\centering
\includegraphics[width=1.0\columnwidth, keepaspectratio]{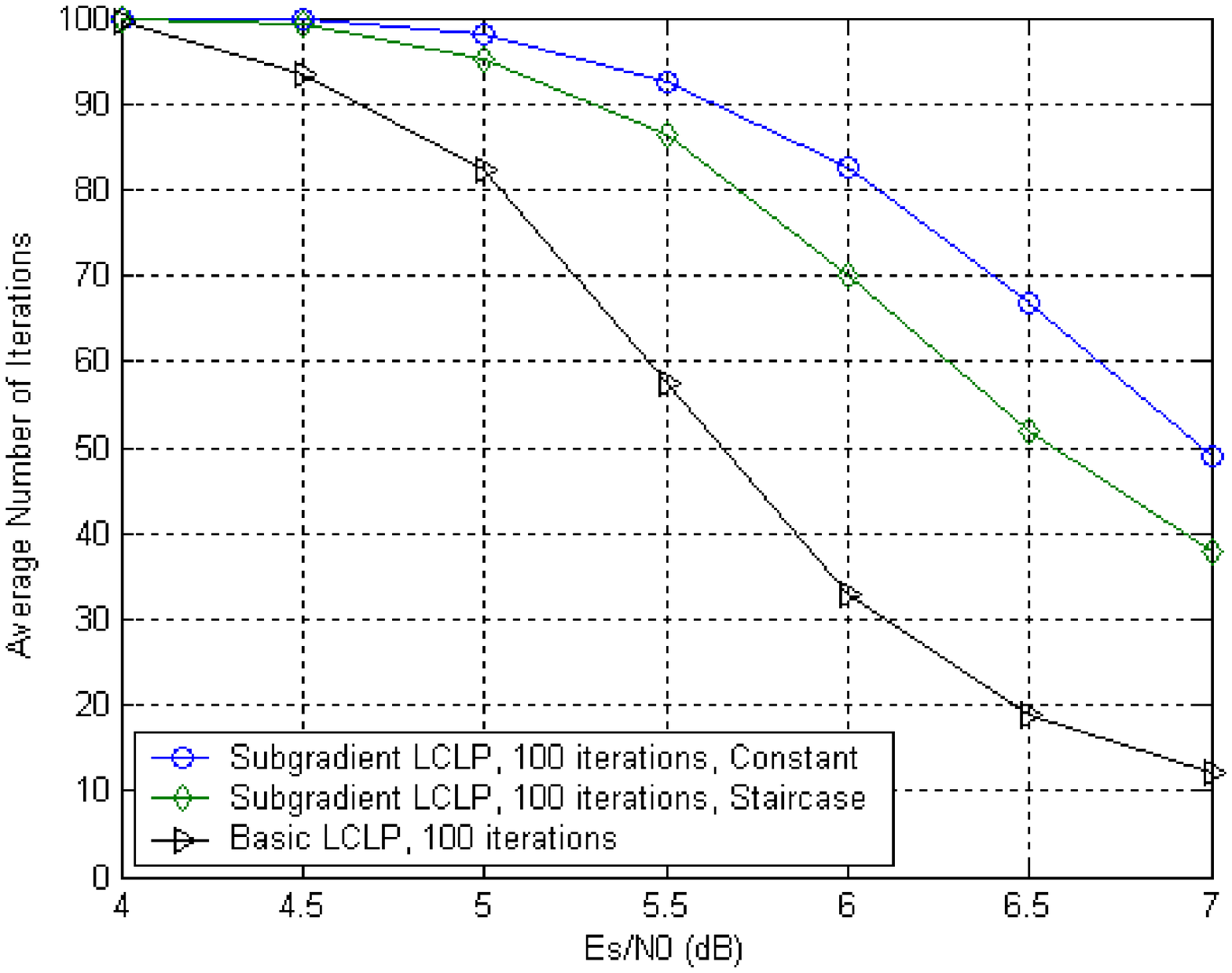}
\caption{Average number of iterations required for the nonbinary basic LCLP and the nonbinary subgradient LCLP decoding algorithm (with different step-size rule) to converge for the $(155, 64)$ quaternary LDPC code under QPSK modulation. } 
\label{fig:iter_155_Q4_sub}
\vspace{-5pt}
\end{minipage}
\end{figure*}

%
\section{Conclusions}
In this paper we generalized the basic LCLP decoding algorithm and the subgradient LCLP decoding algorithm to nonbinary linear codes. %
The complexity per iteration of the nonbinary LCLP decoding algorithms is linear in the code's block length and hence they can also be used for moderate and long block length codes. The complexity of nonbinary basic LCLP decoding algorithm is dominated by the maximum check node degree and the number of elements in the nonbinary alphabet. Furthermore, we proposed a modified BCJR algorithm for efficient check node processing in the nonbinary basic LCLP decoding algorithm. The proposed CN processing algorithm has complexity linear in the check node degree. We also proposed an alternative state metric which can be used to reduce the run time of the CN calculations of the nonbinary basic LCLP decoding algorithm. The error-correcting performance of the nonbinary basic LCLP decoding algorithm is similar to that of the MS algorithm for some classes of LDPC codes. %
%
%
%
%
%
%
%
%
%
%
%
%
%
%
%
%
%

%
%

\section*{ACKNOWLEDGMENTS}
The authors would like to thank Maria Jos\'{e} Canet Subiela for her valuable support in their simulation work.

\mybibitem{Ga_63}{R.~G.~Gallager, {\it Low Density Parity Check Codes}, Monograph, M.I.T. Press, 1963.}

\mybibitem{DaMa_98}{M.~C.~Davey and D.~J.~C.~MacKay, ``Low density parity check codes over GF(q),'' IEEE Communication Letters, vol. 2, no. 6, pp. 165--167, June 1998.}

\mybibitem{Fo_01}{G.~D.~Forney, Jr., ``Codes on graphs: normal realizations,'' IEEE Trans. Inf. Theory, vol. 47, no. 2, pp. 520--548, February 2001.}

\mybibitem{TaSr_01}{R.~M.~Tanner, D.~Sridhara, and T.~Fuja, ``A class of group-structured LDPC codes,'' in Proceedings of Sixth Int. Symp. on Communication Theory and Applications, pp. 365--370, Ambleside, England, July 2001.}

\mybibitem{DiPr_02}{C.~Di, D.~Proietti, I.~E.~Telatar, T. J. Richardson, and R. L. Urbanke, ``Finite-length analysis of low-density parity-check codes on the binary erasure channel,'' IEEE Trans. Inf. Theory, vol. 48, no. 6, pp. 1570--1579, June 2002.}

\mybibitem{Vo_02}{P.~O.~Vontobel, {\it Kalman Filters, Factor Graphs, and Electrical Networks}, Post-Diploma Project, Dept. of Information Technology and Electrical Engineering, ETH Zurich, 2002.}

\mybibitem{Ne_02}{A.~Nedi\'c,  {\it Subgradient Methods for Convex Minimization}. Ph.D. thesis, Dept. of Electrical Engineering and Computer Science, Massachusetts Institute of Technology, Cambridge, MA, 2002.}

\mybibitem{Ri_03}{T.~Richardson,  ``Error floors of LDPC codes,'' in Proceedings of 41st Annual Allerton Conference on Communication, Control, and Computing, pp. 1426--1435, Monticello, IL, October 2003.}

\mybibitem{VoLo_03}{P.~O.~Vontobel and H.-A.~Loeliger, ``On factor graphs and electrical networks,'' Mathematical Systems Theory in Biology, Communication, Computation, and Finance, J. Rosenthal and D.S. Gilliam, eds., IMA Volumes in Math. \& Appl., Springer Verlag, pp. 469--492, 2003.}

\mybibitem{Fe_Th_03}{J.~Feldman, {\it Decoding Error-Correcting Codes via Linear Programming}, Ph.D. Thesis, Dept. of Electrical Engineering and Computer Science, Massachusetts Institute of Technology, Cambridge, MA, June 2003.}

\mybibitem{LO_04}{H.-A.~Loeliger, ``An introduction to factor graphs,'' IEEE Sig. Proc. Magazine, vol. 21, no. 1, pp. 28--41, January 2004.}

\mybibitem{FeWa_05}{J.~Feldman, M.~J. Wainwright, and D.~R. Karger, ``Using linear programming to decode binary linear codes,'' IEEE Trans. Inf. Theory, vol. 51, no. 3, pp. 954--972, March 2005.}

\mybibitem{VoKo_06}{P.~O.~Vontobel and R.~Koetter, ``Towards low-complexity linear-programming decoding,'' in Proceedings of 4th Int. Conf. on Turbo Codes and Related Topics, Munich, Germany, April 3--7, 2006.}

\mybibitem{VoKo_07}{P. O. Vontobel and R. Koetter, ``On low--complexity linear-programming decoding of LDPC codes,'' European Transactions on Telecommunications, vol. 18, no. 5, pp. 509--517, April 2007.}

\mybibitem{GlJa_07}{A.~Globerson and T.~Jaakkola, ``Fixing max-product: convergent message passing algorithms for MAP LP-relaxations,'' In Advances in Neural Information Processing Systems, pp. 553--560, 2007.}

\mybibitem{FlSk_09}{M.~F.~Flanagan, V.~Skachek, E.~Byrne, and M.~Greferath, ``Linear-programming decoding of nonbinary linear codes,'' IEEE Trans. Inf. Theory, vol. 55, no. 9, pp. 4134--4154, September 2009.}

\mybibitem{GoBu_12}{D.~Goldin and D.~Burshtein, ``Iterative linear programming decoding of non-binary linear codes with linear complexity,'' IEEE. Trans. Inf. Theory, vol. 59, no. 1, pp. 282--300, January 2013.}

\mybibitem{MaHa_09}{Y.~Maeda and K.~Haruhiko, ``Error control coding for multilevel cell flash memories using nonbinary low-density parity-check codes,'' 24th IEEE Int. Symp. on Defect and Fault Tolerance in VLSI Systems, pp.~367--375, Chicago, IL, USA, October 7--9, 2009.}

\mybibitem{Bu_09}{D.~Burshtein, ``Iterative approximate linear programming decoding of LDPC codes with linear complexity,'' IEEE Trans. Inf. Theory, vol. 55, no. 11, pp. 4835--4859, November 2009.}

\mybibitem{AnKa_10}{I.~Andriyanova and K.~Kasai, ``Finite-length scaling of non-binary $(c, d)$ LDPC Codes for the BEC,'' 2010 IEEE Int. Symp. on Inf. Theory, pp.~714--718, Dallas, TX, USA, June 13--18, 2010.}

\mybibitem{VoKo_IT}{P.~O.~Vontobel and R.~Koetter, ``Graph-cover decoding and finite-length analysis of message-passing iterative decoding of LDPC codes,'' CoRR,  \texttt{http://www.arxiv.org/abs/cs.IT/0512078}, December~2005.}

\mybibitem{YaFe_06}{K.~Yang, J.~Feldman, and X.~Wang ``Nonlinear programming approaches to decoding low-density parity-check codes,'' IEEE Journal on Selected Areas in Communications, vol. 24, no. 8, pp. 1603--1613, August 2006.}

\mybibitem{Be_99}{D.~Bertsekas, {\it Nonlinear Programming}. Belmont, MA: Athena Scientific, second ed., 1999.}


\mybibitem{YaWa_08}{K.~Yang, X.~Wang, and J.~Feldman, ``A new linear programming approach to decoding linear block codes,'' IEEE Trans. Inf. Theory, vol. 54, no. 3, pp. 1061--1072, March 2008.}

\mybibitem{TaSh_11}{M.~H.~ Taghavi, A.~ Shokrollahi, and P.~H.~Siegel, ``Efficient implementation of linear programming decoding,'' IEEE Trans. Inf. Theory, vol. 57, no. 9, pp. 5960--5982, September 2011.}

\mybibitem{BaLu_08}{S.~Barman, X.~Liu, S.~Draper, and B.~Recht, ``Decomposition methods for large scale LP decoding,'' in Proceedings of 49th Annual Allerton Conference on Communication, Control, and Computing, pp. 253--260, September 28--30, 2011.}

\mybibitem{GoBu_10}{D.~Goldin and D.~Burshtein, ``Approximate iterative LP decoding of LDPC codes over GF(q) in linear complexity,'' in Proceedings of 26th Convention of Electrical and Electronics Engineers in Israel, pp. 960--964, November 17--20, 2010.}

\mybibitem{VoSh_09}{P.~O.~Vontobel and S.~Jalali, ``Coordinate-ascent method for linear programming decoding,'' U.S. Patent Application 11 831 716, February 5, 2009.}

\mybibitem{Pu_12}{M.~Punekar, {\it Efficient LP Decoding of Binary and Nonbinary Linear Codes}, Ph.D. Thesis, School of Electrical, Electronic and Communications Engineering, University College Dublin, Dublin, Ireland, June 2012.}

\mybibitem{PuFl_10}{M.~Punekar and M.~F.~Flanagan, ``Low-complexity LP decoding of nonbinary linear codes,'' in Proceedings of 48th Annual Allerton Conference on Communication, Control, and Computing, pp.~6--13, September 29 -- October 1, 2010.}

\mybibitem{PuFl_11}{M.~Punekar and M.~F.~Flanagan, ``Trellis-based check node processing for low-complexity nonbinary LP decoding,'' in Proceedings of 2011 IEEE Int. Symp. on Information Theory, St. Petersburg, Russia, pp. 1653--1657, August 1--5, 2011.}

\mybibitem{DeFo_07}{D.~Declercq and  M.~Fossorier, ``Decoding algorithms for nonbinary LDPC codes over GF(q),'' IEEE Trans. Comm., vol. 55, no. 4, pp. 633--643, April 2007.}

\end{document}